\keywords{fixed-point logic with counting, descriptive complexity, unique games, inapproximability, inexpressibility.}
\newcommand{\qq}{\mathbb Q}
\newcommand{\ee}{\mathbb E}
\newcommand{\ff}{\mathbb F}
\newcommand{\abs}[1]{\left|{#1}\right|}
\newcommand{\suchthat}{\ | \ }
\newcommand{\tth}{^\text{th}}
\newcommand{\genseq}[3]{{#1}_1 {#3} {#1}_2 {#3} \dots {#3} {#1}_{#2}}
\newcommand{\seq}[2]{\genseq{#1}{#2}{,}}
\newcommand{\twocases}[4]{\begin{cases} #2 & #1 \\ #4 & #3 \end{cases}}
\newcommand{\threecases}[6]{\begin{cases} #2 & #1 \\ #4 & #3 \\ #6 & #5 \end{cases}}
\newcommand{\txt}[1]{\text{#1}}
\newcommand{\stext}[1]{\ \ \ \ \ \text{(#1)}}
\newcommand{\stextn}[1]{\\&\ \ \ \ \ \ \stext{#1}}
\newcommand{\bcause}[1]{\stext{because ${#1}$}}
\newcommand{\snc}[1]{\stext{since ${#1}$}}
\newcommand{\push}{\\ & \ \ \ \ \ \ \ \ \ \ }
\newcommand{\ipnc}[3]{\begin{figure}[H]\begin{center}\includegraphics[scale = {#1}]{#2.pdf}\caption{#3}\end{center}\end{figure}}
\g@addto@macro{\@algocf@init}{\SetKwInOut{Parameter}{Parameters}} 
\newcommand{\PP}{\mathsf{P}}
\newcommand{\NP}{\mathsf{NP}}
\newcommand{\mTwoTwo}[4]{\begin{bmatrix}{#1}&{#2}\\{#3}&{#4}\end{bmatrix}}
\newcommand{\vectTwo}[2]{\begin{bmatrix}{#1}\\{#2}\end{bmatrix}}
\newcommand{\threesat}{\textsf{3SAT}\xspace}
\newcommand{\threexor}{\textsf{3XOR}\xspace}
\newcommand{\vc}{\textsf{VertexCover}\xspace}
\newcommand{\labelcover}{\textsf{LabelCover}\xspace}
\newcommand{\ug}{\textup{\textsf{UniqueGames}}\xspace}
\newcommand{\UG}{\textup{\textsf{UG}}}
\newcommand{\gug}{\textup{\textsf{GroupUniqueGames}}\xspace}
\newcommand{\opt}{\textup{opt}}
\newcommand{\tauuug}{\tau_{\UG(q)}}
\newcommand{\fm}{\ff_2^m}
\newcommand{\soundness}{\frac{1}{2^{2\ell - 1} + 2^{\ell - 1}}}
\begin{document}

\title[Inapproximability of Unique Games in FPC]{Inapproximability of Unique Games in\texorpdfstring{\\}{ }Fixed-Point Logic with Counting}
\titlecomment{{\lsuper*}A conference version of this paper appeared in LICS 2021 under an identical title. ACM subject classification: Theory of computation / Logic / Finite Model Theory}

\author[J.~Tucker-Foltz]{Jamie Tucker-Foltz \lmcsorcid{0000-0001-9174-3341}}
\address{Harvard University}
\email{jtuckerfoltz@gmail.com}

\maketitle

\begin{abstract}
  \noindent We study the extent to which it is possible to approximate the optimal value of a Unique Games instance in Fixed-Point Logic with Counting (FPC). Formally, we prove lower bounds against the accuracy of FPC-interpretations that map Unique Games instances (encoded as relational structures) to rational numbers giving the approximate fraction of constraints that can be satisfied. We prove two new FPC-inexpressibility results for Unique Games: the existence of a $(1/2, 1/3 + \delta)$-inapproximability gap, and inapproximability to within any constant factor. Previous recent work has established similar FPC-inapproximability results for a small handful of other problems. Our construction builds upon some of these ideas, but contains a novel technique.  While most FPC-inexpressibility results are based on variants of the CFI-construction, ours is significantly different. We start with a graph of very large girth and label the edges with random affine vector spaces over $\ff_2$ that determine the constraints in the two structures. Duplicator's strategy involves maintaining a partial isomorphism over a minimal tree that spans the pebbled vertices of the graph.
\end{abstract}

\maketitle

\section{Introduction}\label{secIntro}

While the bulk of ordinary complexity theory rests on conjectures about the limits of efficient computation (e.g., $\PP \neq \NP$), descriptive complexity has had much success in proving \emph{unconditional} lower bounds. By restricting attention only to those algorithms that can be defined in some given primitive logic, it is possible to say much more about what such an algorithm is capable or incapable of doing.

One such logic is \emph{Fixed-Point Logic with Counting (FPC)}, which is first order logic augmented with a least fixed point operator, numeric variables and counting quantifiers asserting that a given number of distinct objects satisfy a given predicate. The set of decision problems definable in FPC forms a proper subset of $\PP$.\footnote{In the presence of a relation interpreted as a total order over the universe, these sets are equal by the Immerman-Vardi theorem \cite{ImmermanVardi1} \cite{ImmermanVardi2}, but this paper is concerned with unordered structures.} Roughly, a polynomial-time algorithm is definable as an FPC-interpretation only if it respects the natural symmetries of its input, without making any arbitrary choices that break those symmetries.\footnote{It is difficult to rigorously define exactly what is meant by ``symmetry breaking." Anderson and Dawar \cite{SymmetricCircuits} give a precise result along these lines, defined in terms of symmetric circuits.} Thus, FPC has become an extremely important and well-studied logic \cite{FPC}, as it seems to elegantly capture the essence of \emph{symmetric computation}. It is a robust logic, including a wide range of powerful algorithmic techniques, such as linear \cite{LinearProgrammingInFPC} and semidefinite \cite{SDPInFPC1} programming.

It was once thought that FPC captured all of polynomial time, until Cai, F\"urer and Immerman \cite{CaiFurerImmerman} constructed a problem that is solvable in polynomial time, yet not definable in FPC. Their proof technique, which has since become known as a \emph{CFI-construction}, was to exhibit pairs of problem instances $(\mathbb{A}_1, \mathbb{B}_1), (\mathbb{A}_2, \mathbb{B}_2), (\mathbb{A}_3, \mathbb{B}_3), \dots$ such that each $\mathbb{A}_k$ is a YES instance, each $\mathbb{B}_k$ is a NO instance, yet $\mathbb{A}_k$ and $\mathbb{B}_k$ are indistinguishable by any sentence of FPC with $k$ variables (formally, they are $C^k$-equivalent: Duplicator wins the $k$-pebble bijective game played on $\mathbb{A}_k$ and $\mathbb{B}_k$; see Section~\ref{subLogics} for the full definition). CFI-constructions have been used to establish logical inexpressibility results for a range of other decision problems---some solvable in polynomial time, others $\NP$-hard \cite{FPC,SolvingEquationsNotInFPC,ThreeColourability}.

In a recent paper, Atserias and Dawar \cite{DefinableInapproximabilityJournal} adapted this technique to obtain the first known \emph{FPC-inapproximability} results, proving the nonexistence of approximation algorithms whose mapping of inputs to outputs is definable as an FPC-interpretation. Essentially, this requires that $\mathbb{A}_k$ and $\mathbb{B}_k$ not only have different optimal values, but that these optimal values differ by some constant multiplicative factor. Atserias and Dawar were able to construct such instances to derive a tight FPC-inapproximability gap for \threexor (like \threesat but with XOR's instead of OR's in each clause), which was then extended via first order interpretations to yield lower bounds for \threesat, \vc and \labelcover.

\subsection{Main results}

In this paper, we extend this line of work to consider a keystone problem in the theory of approximation algorithms: \ug. A \ug instance $U$ is specified by a set of variables taking values in some fixed label set $[q] = \{1, 2, \dots, q\}$, and a set of constraints. Each constraint requires that a certain pair of variables take values that are consistent with some permutation on $[q]$. The goal is to find the maximum fraction of constraints that can be simultaneously satisfied.\footnote{The \ug problem is so named because, given any value for one variable in a constraint, there is a \textbf{unique} value for the other variable that satisfies the constraint (defined by the permutation on the label set associated to the constraint). There is an alternative interpretation of the optimization problem in terms of finding an optimal strategy in a \textbf{game} between two provers and a verifier, where the provers wish to convince the verifier that there is an assignment satisfying all constraints. See the survey by Khot \cite{UGCSurvey} for more background on \ug.} We denote this fraction by $\opt(U)$. Our main result (Theorem~\ref{thmMain}) is that, for any $\delta > 0$ and any positive integer $\ell$, for sufficiently large $q$ we can construct pairs of \ug instances $(\mathbb{A}_1, \mathbb{B}_1), (\mathbb{A}_2, \mathbb{B}_2), (\mathbb{A}_3, \mathbb{B}_3), \dots$ on a label set of size $q$ such that, for all $k$, $\mathbb{A}_k$ and $\mathbb{B}_k$ are $C^k$-equivalent, yet
\begin{align*}
\opt(\mathbb{A}_k) \geq \frac{1}{2^\ell}, && \opt(\mathbb{B}_k) < \soundness + \delta.
\end{align*}

There are two important specializations of the parameter $\ell$. Sending $\ell \to \infty$, it follows that there is no constant-factor approximation algorithm for \ug that is definable as an FPC-interpretation (Corollary~\ref{corUGLowGapMain}). As discussed, this does not rule out the existence of \emph{any} polynomial-time constant-factor approximation algorithm, but it does rule out the existence of a \emph{symmetric} algorithm, including any algorithm based primarily on linear or semidefinite programming (i.e., an algorithm that just returns the optimal value of a linear or semidefinite program constructed from the input via elementary operations).

Setting $\ell = 1$, this result implies that no sentence of FPC can distinguish all instances of optimal value $\frac12$ from those of optimal value $\frac13 + \delta$. The analogue of such a result in ordinary complexity theory would be that \ug has a \emph{$(\frac12, \frac13 + \delta)$-inapproximability gap}, i.e., assuming $\PP \neq \NP$, there is no polynomial-time algorithm distinguishing instances of optimal value $\frac12$ from instances of optimal value $\frac13 + \delta$. Our result is, of course, logically incomparable to results of this kind since we have imposed an FPC-definability requirement and dropped the assumption that $\PP \neq \NP$. If we were to compare them modulo this exchange of assumptions, our result would be stronger than what was known prior to the resolution of the 2-2 Games Conjecture in 2018.\footnote{The 2-2 Games Theorem \cite{UGMidGap2018-2} \cite{UGMidGap2018} implies a $(\frac12, \delta)$-inapproximability gap for \ug, while the previous best was $(\frac12, \frac38 + \delta)$ from 2012, due to O'Donnell and Wright \cite{UGMidGap2012}.} However, while the 2-2 Games Theorem is the culmination of over a decade of research spanning several papers, our FPC lower bound is self-contained and qualitatively very different, exploiting a peculiar weakness of FPC: the inability to perform Gaussian elimination. The result is thus a testament to the power of descriptive complexity in proving lower bounds for important restricted classes of algorithms, even in the realm of approximation.

\subsection{Connections to semidefinite programming}

The prominence of \ug as a problem of study in complexity theory is primarily due to its deep connections with semidefinite programming. A famous theorem of Raghavendra \cite{RagThesis} establishes that, for any constraint satisfaction problem $\Lambda$, if we assume $\PP \neq \NP$ and the \emph{Unique Games Conjecture (UGC)}, the best polynomial-time approximation algorithm for $\Lambda$ is given by solving and rounding a specific semidefinite programming relaxation. Thus, proving the UGC would immediately yield a complete classification of the approximability of constraint satisfaction problems, closing the gaps between the best known upper and lower bounds on the optimal approximation ratios (though still conditional on $\PP \neq \NP$). Raghavendra's theorem is established via a general, abstract reduction from \ug to $\Lambda$. As it so happens, this reduction is definable as an FPC-interpretation. Together with the fact that the optimal value of a semidefinite program can be defined in FPC \cite{SDPInFPC1}, this implies an analogue of Raghavendra's result for algorithms that are definable in FPC, holding without the assumption that $\PP \neq \NP$ \cite{Thesis}.

Thus, if we could prove an FPC-version of the UGC, we would have a complete and unconditional understanding of the limits of symmetric computation in approximating constraint satisfaction problems. This conjecture is as follows.

\begin{conj}[FPC-UGC \cite{Thesis}]\label{cnjFPCUGC}
 	For all $\varepsilon, \delta > 0$, there exists $q$ such that there is no sentence $\phi$ of FPC such that, for any \ug instance $\mathbb{A}$ on a label set of size $q$,
 	\begin{enumerate}
 		\item\label{itmFPCUGCCompleteness} if $\opt(\mathbb{A}) \geq 1 - \varepsilon$, then $\mathbb{A} \models \phi$, and
 		\item\label{itmFPCUGCSoundness} if $\opt(\mathbb{A}) < \delta$, then $\mathbb{A} \not\models \phi$.
 	\end{enumerate}
\end{conj}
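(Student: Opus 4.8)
The plan is to mirror the strategy behind the constant-factor result (Theorem~\ref{thmMain}) but to replace its base gadget by one whose completeness stays close to $1$. The source of all the hardness here is that FPC cannot solve systems of linear equations over a finite abelian group --- the ``inability to perform Gaussian elimination'' alluded to above --- so I would start there. Fix a large group $G = \zzn{q}$ and, following the CFI recipe, build pairs of systems of equations of the form $x_u - x_v = c_{uv}$ laid out on the edges of a high-girth expander together with their twisted counterparts, obtaining $C^k$-equivalent pairs $(\mathbb{A}'_k, \mathbb{B}'_k)$ in which $\mathbb{A}'_k$ is globally consistent while $\mathbb{B}'_k$ is inconsistent yet every bounded-size subsystem of $\mathbb{B}'_k$ is consistent. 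Each equation $x_u - x_v = c_{uv}$ is \emph{already} a unique constraint --- the permutation $y \mapsto y + c_{uv}$ on $[q]$ --- so there is an immediate translation into \ug instances $\mathbb{A}_k, \mathbb{B}_k$ with $\opt(\mathbb{A}_k) = 1 \geq 1 - \varepsilon$, which settles completeness for free.

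All the work is then in the soundness analysis: one must show $\opt(\mathbb{B}_k) < \delta$. The naive translation is far too weak, since an inconsistent but locally consistent linear system still admits assignments violating only a vanishing fraction of equations (roughly the reciprocal of the girth). The standard remedy, as in the existing FPC-inapproximability literature, is to compose the twisted CFI structure with a soundness-amplifying layer --- a long-code/dictatorship-test gadget in the style of Raghavendra's reduction, or a direct-product/parallel-repetition construction --- and then argue that this composition preserves $C^k$-equivalence. Concretely I would proceed: (i) construct the twisted/untwisted linear-system pair over $G$ with tunable girth and expansion; (ii) exhibit a Duplicator strategy in the $k$-pebble bijective game showing that $C^k$-equivalence survives the chosen gadget composition, via the same locality-and-automorphism bookkeeping that underlies the CFI argument; (iii) prove $\opt(\mathbb{B}_k) < \delta$ by a combinatorial argument analogous to the soundness proofs for \threexor, exploiting that any good labelling of $\mathbb{B}_k$ would decode to an assignment satisfying many equations of a locally-consistent-but-globally-inconsistent system, which the twist forbids.

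The step I expect to be the genuine obstacle --- and the reason this is stated as a conjecture rather than a theorem --- is (iii) together with the completeness constraint. Every construction of this type currently known pays for low soundness with low completeness: Theorem~\ref{thmMain} reaches soundness $\soundness + \delta$ only at completeness $\frac{1}{2^\ell}$, and both tend to $0$ together as $\ell \to \infty$. Decoupling them --- holding completeness within $\varepsilon$ of $1$ while driving soundness below $\delta$, inside a single $C^k$-equivalent pair --- appears to require an FPC-robust analogue of the dictatorship-testing and invariance-principle machinery that powers the (conditional) classical Unique Games hardness reductions, and it is precisely that machinery which has no known symmetric counterpart. The skeleton above is therefore plausible, but closing the completeness-versus-soundness gap is where a genuinely new idea would be needed.
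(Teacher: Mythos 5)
There is nothing to compare against here: the statement you were given is Conjecture~\ref{cnjFPCUGC}, which the paper explicitly leaves open --- its main theorem (Theorem~\ref{thmMain}) is only a $(\frac{1}{2^\ell}, \soundness+\delta)$ step toward it --- and your text is likewise a research programme rather than a proof, as you yourself concede in the final paragraph. So the honest verdict is that neither you nor the paper proves the statement, and your closing assessment of why (no symmetric analogue of the dictatorship-test/invariance machinery, and the completeness--soundness coupling in all known constructions) is essentially the same diagnosis the paper gives.

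There is, however, a concrete flaw in the skeleton you propose, and it is worth naming because the paper refutes it directly. Your step (i) asks for $C^k$-equivalent pairs in which $\mathbb{A}'_k$ is a \emph{consistent} system of two-variable equations $x_u - x_v = c_{uv}$ and $\mathbb{B}'_k$ is inconsistent, so that the translated \ug instance has $\opt(\mathbb{A}_k)=1$ and ``completeness comes for free.'' No such pairs exist for $k \geq 3$: by Proposition~\ref{proUGCompletelySatisfiableInFPC}, complete satisfiability of a \ug instance is definable in FPC and even in $C^3$, because a binary \emph{unique} constraint propagates a label along an edge deterministically, so Spoiler can walk an inconsistent cycle with three pebbles. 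The known FPC-inexpressibility results for solvability of equation systems (Atserias--Bulatov--Dawar, and the \threexor construction of Atserias--Dawar) rely essentially on equations with at least three variables, which is exactly why the paper says that approach does not transfer to \ug. This perfect-completeness barrier is the reason the paper's construction introduces bundles of mutually contradictory parallel constraints (driving completeness down to $\frac{1}{2^\ell}$) and the reason the conjecture is stated with completeness $1-\varepsilon$ rather than $1$. Any viable attack on Conjecture~\ref{cnjFPCUGC} has to start from gadgets that give Duplicator non-unique choices while losing only an $\varepsilon$ fraction of satisfiable constraints --- the ``larger gadgets with more complicated automorphisms'' the conclusion of the paper points to --- and your proposal does not supply such a gadget.
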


If true, this would also imply that no fixed level of the Lasserre hierarchy can disprove the ordinary UGC, which is a well-known open problem in complexity theory.\footnote{This problem is described in \cite{LassereDisproveUGCQuestion}. See \cite{SDPInFPC1} for a discussion of the relationship between FPC and the Lasserre hierarchy.}

The main result of this paper is a step in the direction of proving this conjecture. Instead of a $(1 - \varepsilon, \delta)$-gap, we have a $(\frac{1}{2^\ell}, \soundness + \delta)$-gap for any positive integer $\ell$. In light of our aforementioned comparison to existing conditional lower bounds on \ug in the realm of polynomial-time computation, we believe there is promise that the FPC-UGC may be resolved before the ordinary UGC.

\subsection{Challenges and techniques}\label{subTechniques}

In typical CFI-constructions, $\mathbb{A}_k$ and $\mathbb{B}_k$ are exactly the same except at one small place. The difference is critical in that it causes $\mathbb{A}_k$ and $\mathbb{B}_k$ to differ with respect to the property being shown to be inexpressible, yet it is small enough that Duplicator can repeatedly move it around to somewhere else, hiding it from Spoiler. For \ug (and probably any other optimization problem of practical interest), two structures differing only on a small number of constraints must have nearly the same optimal value. Since we require instances whose optimal values differ significantly, this standard approach does not work for our setting. Duplicator will never be able to maintain an invariant that the two structures look isomorphic everywhere except in one place.

Atserias and Dawar \cite{DefinableInapproximabilityJournal} overcame this difficulty for \threexor by constructing a 3CNF Boolean formula from the \threexor instance and applying theorems on resolution proofs and their connection to existential pebble games. This approach does not work for \ug, since the variables are not Boolean, and the constraints are only between pairs of variables, not triples. Thus, our construction is quite different from existing CFI-constructions in the literature.

The main idea is to start from a graph of large girth and label each edge with a random $\ell$-dimensional subspace of $\fm$ (the $m$-dimensional vector space over the 2-element field) for some large integer $m > \ell$. These subspaces determine the constraints in the more-satisfiable instance $\mathbb{A}_k$. To obtain the less-satisfiable instance $\mathbb{B}_k$, we transform these subspaces into affine subspaces, shifted by random vectors. Duplicator's strategy in the $k$-pebble bijective game relies on the fact that, with high probability, along \emph{any} path $p$ that is sufficiently long (though less than the girth), the union of all of the subspaces along $p$ spans $\fm$. This can be used to ``fill in" the bijection between $\mathbb{A}_k$ and $\mathbb{B}_k$ along long paths, so all Duplicator has to worry about are short paths between pebbled vertices. Duplicator can ensure consistency across these short paths by maintaining an appropriate invariant throughout the game.

\subsection{Organization}

In Section~\ref{secPrelim} we review the necessary background on \ug, FPC, and approximation algorithms. In Section~\ref{secLiftedGraph} we adapt a key tool from the literature on CFI-constructions for satisfiability of systems of linear equations to our setting. Section~\ref{secExamples} works through two examples illustrating some of the main ideas of our construction. The formal proof of $C^k$-equivalence is presented in Section~\ref{secMainProof}, and conclusions follow in Section~\ref{secConclusion}.

\section{Preliminaries}\label{secPrelim}

We assume the reader is familiar with relational structures and first-order (FO) logic. All graphs are undirected, but may contain multiple edges between a pair of vertices and/or loops from a vertex to itself. A graph is \emph{simple} if it has no multiple edges or self-loops. Paths are not allowed to repeat edges.

\subsection{Unique games}\label{subUGBackground}

We denote the maximization problem of \ug on a label set of size $q$ by \UG($q$). We view the label set as $[q] := \{1, 2, \dots, q\}$. We encode \UG($q$) instances as relational structures using a vocabulary $\tauuug$ consisting of $q!$ relation symbols, each of arity 2: for every permutation $\pi: [q] \to [q]$, $\tauuug$ has a relation symbol $P_\pi$. The universe can be thought of as the set of variables of the CSP, and each $P_\pi$ is interpreted as the set of pairs of variables $(x_1, x_2)$ to which we have a constraint that $x_1$ and $x_2$ should take values $z_1$ and $z_2$ in $[q]$ such that $\pi(z_1) = z_2$. These are precisely the kinds of constraints that are allowed in \ug. For any \UG($q$) instance $U$, the objective is to compute $\opt(U) \in [0, 1]$, the maximum fraction of constraints that can be simultaneously satisfied by any assignment of variables to values in $[q]$. We call $\opt(U)$ the \emph{optimal value}, or \emph{satisfiability}, of $U$.

\begin{figure}\begin{center}
		\includegraphics[scale=1.3]{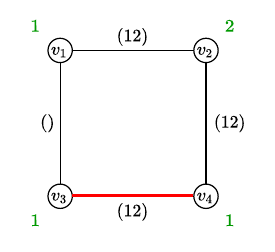}
		\caption{\label{figUGExample} A \ug instance over the label set $\{1, 2\}$ represented graphically, along with one optimal solution (green). Only the bottom edge (red) is unsatisfied by this solution.}
\end{center}\end{figure}

It is often convenient to think of \UG($q$) instances as being defined over some graph $H$, where each vertex $v \in V(H)$ represents a variable $x_v$, and each edge $\{v_1, v_2\} \in E(H)$ is associated to a permutation $\pi_{v_1, v_2}$ on the label set $[q]$ defining which labels for variable $x_{v_1}$ correspond to which labels for variable $x_{v_2}$. When the permutations are involutions (as will be the case throughout this paper), the orientation of the edges does not matter, so we can associate each permutation to an edge, rather than an ordered pair of vertices. For example, Figure~\ref{figUGExample} shows a \UG(2) instance $U$ with edge permutations written in cycle notation, along with one of the optimal assignments ($x_{v_1} = x_{v_3} = x_{v_4} = 1$; $x_{v_2} = 2$) realizing $\opt(U) = \frac34$. If we were to encode this instance as a $\tau_{\UG(2)}$-structure
\[\mathbb{A} = \langle A, P_{()}^\mathbb{A}, P_{(12)}^\mathbb{A} \rangle,\]
the universe would be $A = \{x_{v_1}, x_{v_2}, x_{v_3}, x_{v_4}\}$, and the relations would be interpreted as
\begin{align*}
P_{()}^\mathbb{A} &= \{(x_{v_1}, x_{v_3}), (x_{v_3}, x_{v_1})\},\\
P_{(12)}^\mathbb{A} &= \{(x_{v_1}, x_{v_2}), (x_{v_2}, x_{v_1}), (x_{v_2}, x_{v_4}), (x_{v_4}, x_{v_2}), (x_{v_3}, x_{v_4}), (x_{v_4}, x_{v_3})\}.
\end{align*}
Note in this case the relations happen to be disjoint, which need not be true in general. Indeed, most of the instances constructed in this paper involve multiple, contradictory constraints between some pairs of variables.

A special subclass of \ug instances play a particularly important role, both in this paper and in other papers on \ug \cite{UGCSurvey}. We call these \gug instances, which satisfy the following additional properties:
\begin{enumerate}
	\item\label{itmGUGVertexLabels} The label set $[q]$ is identified with some finite Abelian group $A$ of order $q$.
	\item\label{itmGUGEdgeLabels} For every permutation constraint $\pi$, there is some $g \in A$ such that $\pi(x) = g + x$ (we always write the group operation additively). Thus, we can identify the set of permutations with $A$ as well.
\end{enumerate}

It is convenient to express constraints as equations in $A$. For instance, if $\pi$ is the permutation from condition (\ref{itmGUGEdgeLabels}) above and $(x_{v_1}, x_{v_2}) \in P_\pi^\mathbb{A}$ for some $\tauuug$-structure $\mathbb{A}$, then we would write this constraint as the equation
\[x_{v_2} = g + x_{v_1}.\]
We depict such a constraint graphically by writing $g$ on the (oriented) edge $\{v_1, v_2\}$ (see Figures~\ref{figKEquals2Example} and~\ref{figK4Example} for example). Henceforth we will informally speak of \gug instances $U$ and equations that occur as constraints of $U$, with the implicit understanding of how these are formally represented as $\tauuug$-structures $\mathbb{A}$ and relations $P_\pi^\mathbb{A}$.

\subsection{The logics FPC and $C^k$}\label{subLogics}

Since we are only concerned with lower bounds, it is not necessary to go into the formal definitions of any of the logics discussed in this paper. The main important fact we require is that, for any FPC sentence $\phi$, there is a number $k$ such that $\phi$ can be translated into $C^k$, the fragment of infinitary FO logic with counting quantifiers consisting of (possibly infinite) sentences with only $k$ variables \cite{InfinitaryTranslation}. Therefore, to show that a given property $\mathcal{P}$ is not expressible in FPC, it suffices to construct, for any $k$, a pair of structures $\mathbb{A} = \mathbb{A}_k$ and $\mathbb{B} = \mathbb{B}_k$ such that $\mathbb{A}$ has property $\mathcal{P}$ but $\mathbb{B}$ does not, yet no sentence of $C^k$ can distinguish $\mathbb{A}$ from $\mathbb{B}$, in the sense that $\mathbb{A}$ models any $C^k$ sentence if and only if $\mathbb{B}$ does. When this is the case, we say that $\mathbb{A}$ and $\mathbb{B}$ are \emph{$C^k$-equivalent}, written $\mathbb{A} \equiv_{C^k} \mathbb{B}$.

There is a useful characterization of $C^k$-equivalence in terms of a game between two players, Spoiler and Duplicator, called the \emph{$k$-pebble bijective game}. The board on which they play consists of the universe $A$ of structure $\mathbb{A}$ and  the universe $B$ of structure $\mathbb{B}$. There are $k$ pairs of \emph{pebbles}, initially not placed anywhere. Throughout the game, the pairs of pebbles will be placed on elements of the two universes, one pebble in each universe. Each round of the game consists of three parts:
\begin{enumerate}
	\item\label{itmPebbleGame1} Spoiler picks up one of the $k$ pairs of pebbles, removing them from the board.
	\item\label{itmPebbleGame2} Duplicator gives a bijection $f: A \to B$ such that, for all $1 \leq i \leq k$, if the $i\tth$ pebble pair is placed on some pair of elements $a_i \in A$, $b_i \in B$, then $f(a_i) = b_i$.
	\item\label{itmPebbleGame3} Spoiler places the pebbles back down, placing one pebble on some $a \in A$ and the other pebble on $f(a) \in B$.
\end{enumerate}
At the end of a round, Spoiler wins if the map sending each pebbled element in $A$ to its correspondingly-pebbled element in $B$ is not a \emph{partial isomorphism} between the two structures, i.e., there is some relation in one of the two structures that holds of a set of pebbled elements, but the corresponding relation does not hold in the other structure of the correspondingly-pebbled elements. If Spoiler is unable to win the game in any finite number of moves, then Duplicator wins.

\begin{thmC}[\cite{BijectiveGame}]\label{thmHella}
	Duplicator has a winning strategy in the $k$-pebble bijective game played on $\mathbb{A}$ and $\mathbb{B}$ if and only if $\mathbb{A} \equiv_{C^k} \mathbb{B}$.
\end{thmC}

\subsection{Definability of algorithms}

In descriptive complexity, the logical analogue of a decision algorithm is a sentence. To capture algorithms with non-boolean outputs, we turn to \emph{interpretations}. Suppose we fix finite, relational vocabularies $\sigma$ for the input type and $\tau$ for the output type. By saying that an algorithm is \emph{FPC-definable}, we mean that the mapping from $\sigma$-structures to $\tau$-structures defined by the algorithm can be realized as an \emph{FPC-interpretation of $\tau$ in $\sigma$}.

Roughly, an FPC-interpretation of $\tau$ in $\sigma$ is a formal way to construct a $\tau$-structure $\mathbb{B}$ from a $\sigma$-structure $\mathbb{A}$ by defining the universe of $\mathbb{B}$ in terms of copies of the universe of $\mathbb{A}$, with FPC $\sigma$-formulas defining membership for each relation in $\tau$. For the formal details, see, for example, Dawar and Wang \cite{SDPInFPC1}. Theorem 6 of \cite{SDPInFPC1} contains an example of an FPC-definable algorithm.

\subsection{Inapproximability}\label{subInapproximability}

Let $\Lambda$ denote an arbitrary maximization problem (e.g., think of $\Lambda$ as \UG($q$) for some $q$). For $\alpha \in [0, 1]$, an \emph{$\alpha$-approximation algorithm} for $\Lambda$ is a polynomial-time algorithm that, given an instance of $\Lambda$ with optimal value $x^*$, returns a value $x$ such that $\alpha x^* \leq x \leq x^*$. Typically, an $\alpha$-approximation algorithm works by finding a valid ``solution" to the instance of $\Lambda$ (for \ug, this would be an assignment of values to variables) and returning the value of that solution. The solution may not be optimal, but it should be guaranteed to have value at least an $\alpha$-fraction of that of the optimal solution, whatever that value may be. The closer $\alpha$ is to 1, the better the guarantee of the algorithm.

Let us fix vocabularies $\tau_{\Lambda}$ for instances of $\Lambda$ and $\tau_\qq$ for rational numbers, so that an FPC-definable approximation algorithm for $\Lambda$ is an FPC-interpretation of $\tau_\qq$ in $\tau_\Lambda$. The following lemma is a logical formulation of the well-known\footnote{See the introduction by Vazirani \cite{VaziraniBook} for example.} relationship between approximation algorithms and gap problems.

\begin{lem}\label{lemGapToInapproximability}
	Suppose, for some rational numbers $c \geq s \geq 0$, no sentence of FPC can distinguish $\tau_\Lambda$-structures with optimal value $\geq c$ from those with optimal value $< s$. Then there is no FPC-definable $\alpha$-approximation algorithm for $\Lambda$ for any $\alpha \geq \frac{s}{c}$.
\end{lem}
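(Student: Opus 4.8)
The plan is to argue by contraposition: an FPC-definable $\alpha$-approximation algorithm with $\alpha \geq \frac{s}{c}$ would yield a single FPC sentence separating the two classes of structures, contradicting the hypothesis. (Note that the hypothesis is vacuous unless $c > 0$, since ``optimal value $< 0$'' holds of nothing; so we may assume $c > 0$, which also makes the ratio $\frac{s}{c}$ well-defined.)

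So suppose $\mathcal{A}$ is an FPC-definable $\alpha$-approximation algorithm for $\Lambda$ with $\alpha \geq \frac{s}{c}$. By definition it is realized as an FPC-interpretation $\Theta$ of $\tau_\qq$ in $\tau_\Lambda$. Given any $\tau_\Lambda$-structure $\mathbb{A}$ with $x^* := \opt(\mathbb{A})$, the $\tau_\qq$-structure $\Theta(\mathbb{A})$ encodes a rational $x$ with $\alpha x^* \leq x \leq x^*$. The heart of the argument is then purely arithmetic: if $x^* \geq c$ then $x \geq \alpha x^* \geq \alpha c \geq \frac{s}{c} \cdot c = s$, whereas if $x^* < s$ then $x \leq x^* < s$. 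Hence the single numeric predicate ``$x \geq s$'' is true on every instance of optimal value $\geq c$ and false on every instance of optimal value $< s$.

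It remains to package ``the rational encoded by $\Theta(\mathbb{A})$ is at least $s$'' as an FPC sentence over $\tau_\Lambda$. Since $s$ is a fixed rational, there is an FPC formula $\psi$ over $\tau_\qq$ that holds in a $\tau_\qq$-structure precisely when the rational it encodes is $\geq s$; composing $\psi$ with $\Theta$ (FPC-interpretations compose with FPC-formulas to give FPC-formulas over the source vocabulary) yields an FPC sentence $\phi$ over $\tau_\Lambda$ with $\mathbb{A} \models \phi$ if and only if $x \geq s$. By the previous paragraph, $\phi$ holds on every $\tau_\Lambda$-structure of optimal value $\geq c$ and fails on every one of optimal value $< s$, contradicting the hypothesis that no FPC sentence distinguishes these classes. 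This proves the lemma.

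The only genuine subtleties are the composition step and the encoding of rationals: one must check that comparing the output of the interpretation against a fixed constant $s$ is expressible in FPC (over $\tau_\qq$) and that substituting the interpretation $\Theta$ into such a formula stays within FPC; both are standard facts about FPC-interpretations. Everything else reduces to the elementary inequality $\alpha c \geq s$, together with a careful check of the boundary: the guarantee $\alpha x^* \leq x$ gives $x \geq s$ exactly (not merely $x$ above some smaller threshold) when $x^* \geq c$, and $x \leq x^* < s$ gives strict inequality on the other side, so $s$ itself is the correct cutoff defining $\phi$.
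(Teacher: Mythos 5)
Your proof is correct and follows essentially the same route as the paper: both argue by contraposition, use the interpretation $\Theta$ to build the FPC sentence $\phi$ expressing ``$\Theta(\mathbb{A}) \geq s$'', and apply the approximation guarantee together with the inequality $\alpha c \geq s$ to show $\phi$ separates instances of value $\geq c$ from those of value $< s$. Your added remarks on composing $\Theta$ with a comparison formula and on the $c > 0$ boundary case are harmless elaborations of steps the paper leaves implicit.
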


\begin{proof}
	We prove the contrapositive. Suppose there was such an interpretation $\Theta$ of $\tau_\qq$ in $\tau_\Lambda$. Then we can use $\Theta$ to write an FPC sentence $\phi$ in vocabulary $\tau_\Lambda$ expressing the property that, for a given $\tau_\Lambda$ structure $\mathbb{A}$, $\Theta(\mathbb{A}) \geq s$. (By the Immerman-Vardi theorem, inequality comparison of rational numbers can be done in fixed-point logic even without needing counting quantifiers, since numbers are encoded in ordered structures.) Let $\mathbb{A}$ be any $\tau_\Lambda$ structure with optimal value $x^*$ such that either
	\begin{enumerate}
		\item\label{itmGapProblemCompleteness} $x^* \geq c$, or
		\item\label{itmGapProblemSoundness} $x^* < s$.
	\end{enumerate}
	Let $x := \Theta(\mathbb{A})$. If $\mathbb{A} \models \phi$, we have $x^* \geq x \geq s$. Therefore, we cannot be in case (\ref{itmGapProblemSoundness}), so we must be in case (\ref{itmGapProblemCompleteness}). Conversely, if $\mathbb{A} \not\models \phi$, we have
	\[\alpha x^* \leq x < s \leq c\alpha.\]
	Therefore, $x^* < c$, so we cannot possibly be in case (\ref{itmGapProblemCompleteness}), and hence must be in case (\ref{itmGapProblemSoundness}). Thus, $\phi$ distinguishes the two cases.
\end{proof}

So, just as in ordinary complexity theory, where hardness of a gap problem (that is, distinguishing the two cases in Lemma~\ref{lemGapToInapproximability}), implies hardness of approximation, FPC-inexpressibility of a gap problem implies the nonexistence of FPC-definable approximation algorithms. In either case, the goal in proving lower bounds on algorithms/logics is always to establish as wide a gap as possible.

\section{The label-lifted instance}\label{secLiftedGraph}

Let $U$ be a \gug instance with group $A$ and variable set
\[\{x_v \suchthat v \in V\}.\]
Then we define $\mathcal{G}(U)$ to be a \gug instance with group $A$ and variable set
\[\{x_v^g \suchthat v \in V,\ g \in A\}.\]
For every equation
\[x_{v_1} - x_{v_2} = z\]
in the constraint set of $U$ and every $g_1, g_2 \in A$, we have the equation
\[(x_{v_1}^{g_1} - g_1) - (x_{v_2}^{g_2} - g_2) = z\]
in the constraint set of $\mathcal{G}(U)$. We call $\mathcal{G}(U)$ the \emph{label-lifted instance\footnote{This construction is similar to the \emph{label-extended graph} of a \ug instance (see, for example, \cite{LabelExtendedGraphExample, UGAT}), but it is not the same thing. The label-extended graph is obtained by taking all of the edges with identity constraints in the label-lifted instance. } of $U$}.

The operator $\mathcal{G}$ is similar to the operator $G$ used by Atserias and Dawar \cite[Sec.\txt{} 3.2]{DefinableInapproximabilityJournal}, and also implicitly used by Atserias, Bulatov and Dawar \cite[Sec.\txt{} 3]{SolvingEquationsNotInFPC}. The point is that the extra structure gives Duplicator a new possible strategy to win the $k$-pebble bijective game played on $\mathcal{G}(U_1)$ and $\mathcal{G}(U_2)$ which may not be possible with the original instances $U_1$ and $U_2$; while at the same time, applying $\mathcal{G}$ does not change the satisfiability.

\begin{lem}\label{lemGSameSatisfiability}
	For any \gug instance $U$, $\opt(\mathcal{G}(U)) = \opt(U)$.
\end{lem}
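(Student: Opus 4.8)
The plan is to prove equality by showing both inequalities, exhibiting explicit maps between assignments of $U$ and assignments of $\mathcal{G}(U)$ that preserve the fraction of satisfied constraints. Note first that the constraint sets are in natural correspondence: each equation $x_{v_1} - x_{v_2} = z$ of $U$ gives rise to exactly $|A|^2$ equations of $\mathcal{G}(U)$, one for each pair $(g_1, g_2) \in A^2$, namely $(x_{v_1}^{g_1} - g_1) - (x_{v_2}^{g_2} - g_2) = z$. So it suffices to find, for any assignment to the variables of one instance, an assignment to the variables of the other that does at least as well \emph{proportionally}; since the blow-up factor $|A|^2$ is uniform across constraints of $U$, fractions are preserved.

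For the direction $\opt(\mathcal{G}(U)) \geq \opt(U)$, I would take an optimal assignment $v \mapsto a_v \in A$ for $U$ and define an assignment for $\mathcal{G}(U)$ by $x_v^g := a_v + g$. Then for the lifted equation coming from $x_{v_1} - x_{v_2} = z$ with shifts $g_1, g_2$, the left-hand side evaluates to $(a_{v_1} + g_1 - g_1) - (a_{v_2} + g_2 - g_2) = a_{v_1} - a_{v_2}$, which equals $z$ precisely when the original equation was satisfied. Hence \emph{every} copy of a satisfied constraint is satisfied, and every copy of an unsatisfied constraint is unsatisfied, so the fraction is exactly preserved: $\opt(\mathcal{G}(U)) \geq \opt(U)$.

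For the reverse direction $\opt(\mathcal{G}(U)) \leq \opt(U)$, the idea is an averaging argument. Given any assignment $x_v^g \mapsto b_v^g \in A$ for $\mathcal{G}(U)$, consider for each $v$ the ``corrected'' value $c_v^g := b_v^g - g$. For each original edge $\{v_1, v_2\}$ with equation $x_{v_1} - x_{v_2} = z$, the fraction of the $|A|^2$ lifted copies satisfied is the fraction of pairs $(g_1,g_2)$ with $c_{v_1}^{g_1} - c_{v_2}^{g_2} = z$. Now pick, independently for each vertex $v$, a uniformly random $g_v \in A$ and set $a_v := c_v^{g_v}$; for a fixed edge the probability that $a_{v_1} - a_{v_2} = z$ equals exactly the fraction of satisfied copies of that constraint. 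By linearity of expectation, the expected fraction of constraints of $U$ satisfied by the random assignment $v \mapsto a_v$ equals the fraction of constraints of $\mathcal{G}(U)$ satisfied by $b$; hence some choice of $(g_v)_{v \in V}$ achieves at least that fraction, giving $\opt(U) \geq \opt(\mathcal{G}(U))$.

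Combining the two inequalities yields $\opt(\mathcal{G}(U)) = \opt(U)$. I do not anticipate a serious obstacle here; the only point requiring a little care is bookkeeping the uniform blow-up factor so that the comparison is genuinely between \emph{fractions} rather than raw counts, and making sure the group operation is used consistently (everything is additive and the relevant cancellations are immediate). The averaging step in the reverse direction is the one worth stating cleanly, but it is standard.
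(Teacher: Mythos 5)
Your proposal is correct and takes essentially the same route as the paper: the forward direction uses the identical shifted assignment $x_v^g := x_v + g$, and your reverse direction (independent uniform shifts $g_v$ per vertex plus linearity of expectation) is just the probabilistic phrasing of the paper's deterministic averaging over all $q^n$ tuples $(g_1,\dots,g_n)$, with ``some tuple attains the average.'' No gap.
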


\begin{proof}
	We begin by introducing some notation which is not used outside of this proof. Suppose $U$ has $n$ variables, denoted $x_{v_1}, x_{v_2}, \dots, x_{v_n}$, $C$ constraints, and $\abs{A} = q$. For all $i, j \in [n]$, write $c(i, j)$ for the number of constraints between variables $x_{v_i}$ and $x_{v_j}$, and enumerate them as
	\[\{x_{v_i} - x_{v_j} = z_{i, j, k} \suchthat k \in [c(i, j)]\}.\]
	In the context of some fixed assignment of variables, for any constraint equation $\beta$, let $\mathbb{I}(\beta)$ be the function that evaluates to 1 if $\beta$ is satisfied under the assignment and 0 if $\beta$ is not satisfied.
	
	Let $x_v$ be an assignment\footnote{We sometimes use a symbol like $x_v$ or $x_v^g$ to denote a specific variable, and sometimes to denote the value assigned to that variable. When $v$ is unspecified, as it is here, we mean a function assigning a value to each variable. It should be clear from context which of the three meanings we intend. } of variables attaining the optimum satisfiability of $U$, i.e.,
	\[\opt(U) = \frac1C \sum_{i, j \in [n]}\ \sum_{k \in [c(i, j)]} \mathbb{I}(x_{v_i} - x_{v_j} = z_{i, j, k}).\]
	From this, define an assignment of variables of $\mathcal{G}(U)$ by
	\[x_{v}^{g} := x_{v} + g.\]
	Then the optimal value of $\mathcal{G}(U)$ is at least the number of constraints satisfied by this assignment divided by the total number of constraints $\frac{1}{Cq^2}$, i.e.,
	{\allowdisplaybreaks\begin{align*}
	\opt(\mathcal{G}(U)) &\geq \frac{1}{Cq^2}\sum_{i, j \in [n]}\ \sum_{g_i, g_j \in A}\ \sum_{k \in [c(i, j)]} \mathbb{I}((x_{v_i}^{g_i} - g_i) - (x_{v_j}^{g_j} - g_j) = z_{i, j, k})\\
	&= \frac{1}{Cq^2}\sum_{i, j \in [n]}\ \sum_{g_i, g_j \in A}\ \sum_{k \in [c(i, j)]} \mathbb{I}(((x_{v_i} + g_i) - g_i) - ((x_{v_j} + g_j) - g_j) = z_{i, j, k})\\
	&= \frac{1}{Cq^2}\sum_{i, j \in [n]}\ \sum_{g_i, g_j \in A}\ \sum_{k \in [c(i, j)]} \mathbb{I}(x_{v_i} - x_{v_j} = z_{i, j, k})\\
	&= \frac{1}{Cq^2}\sum_{i, j \in [n]} (q^2) \sum_{k \in [c(i, j)]} \mathbb{I}(x_{v_i} - x_{v_j} = z_{i, j, k})\\
	&= \frac{1}{C}\sum_{i, j \in [n]} \sum_{k \in [c(i, j)]} \mathbb{I}(x_{v_i} - x_{v_j} = z_{i, j, k})\\
	&= \opt(U).
	\end{align*}}
	
	For the other direction, let $x_v^g$ be an assignment of variables attaining the optimum satisfiability of $\mathcal{G}(U)$. Then
	\begin{align*}
	\opt(\mathcal{G}(U)) &= \frac{1}{Cq^2}\sum_{i, j \in [n]}\ \sum_{g_i, g_j \in A}\ \sum_{k \in [c(i, j)]} \mathbb{I}((x_{v_i}^{g_i} - g_i) - (x_{v_j}^{g_j} - g_j) = z_{i, j, k})\\
	&= \frac{1}{Cq^{n}} \sum_{i, j \in [n]}\ \sum_{\seq{g}{n} \in A}\ \sum_{k \in [c(i, j)]} \mathbb{I}((x_{v_i}^{g_i} - g_i) - (x_{v_j}^{g_j} - g_j) = z_{i, j, k}),
	\end{align*}
	since, for every fixed $i, j \in [n]$, the term
	\[\sum_{k \in [c(i, j)]} \mathbb{I}((x_{v_i}^{g_i} - g_i) - (x_{v_j}^{g_j} - g_j) = z_{i, j, k})\]
	is counted exactly $q^{n - 2}$ times. Rearranging the order of summation, we have
	\begin{equation}\label{equGSameSatisfiability1}
	\opt(\mathcal{G}(U)) = \frac{1}{Cq^n} \sum_{\seq{g}{n} \in A}\ \sum_{i, j \in [n]}\ \sum_{k \in [c(i, j)]} \mathbb{I}((x_{v_i}^{g_i} - g_i) - (x_{v_j}^{g_j} - g_j) = z_{i, j, k})
	\end{equation}
	By the averaging principle, there must be some fixed $\seq{g}{n} \in A$ such that
	\begin{equation}\label{equGSameSatisfiability2}
	\sum_{i, j \in [n]}\ \sum_{k \in [c(i, j)]} \mathbb{I}((x_{v_i}^{g_i} - g_i) - (x_{v_j}^{g_j} - g_j) = z_{i, j, k}) \geq C \cdot \opt(\mathcal{G}(U)),
	\end{equation}
	for otherwise, if all of the $q^n$ choices of $\seq{g}{n} \in A$ failed to satisfy (\ref{equGSameSatisfiability2}), we could strictly upper-bound the right-hand side of (\ref{equGSameSatisfiability1}) by
	\[\frac{1}{Cq^{n}}(q^n)\left(C \cdot \opt(\mathcal{G}(U))\right) = \opt(\mathcal{G}(U)),\]
	contradicting (\ref{equGSameSatisfiability1}). Using these fixed $g_i$ values, we define an assignment of variables of $U$ by
	\[x_{v_i} := x_{v_i}^{g_i} - g_i.\]
	It then follows that the optimal value of $U$ is at least the fraction of constraints satisfied by this assignment, i.e.,
	\begin{align*}
	\opt(U) &\geq \frac1C\sum_{i, j \in [n]}\ \sum_{k \in [c(i, j)]} \mathbb{I}(x_{v_i} - x_{v_j} = z_{i, j, k})\\
	&= \frac1C \sum_{i, j \in [n]}\ \sum_{k \in [c(i, j)]} \mathbb{I}((x_{v_i}^{g_i} - g_i) - (x_{v_j}^{g_j} - g_j) = z_{i, j, k})\\
	&\geq \opt(\mathcal{G}(U)) \stext{by (\ref{equGSameSatisfiability2})},
	\end{align*}
	as desired.\footnote{We remark that, with very slight modification, this argument also shows that the $G$ operator of Atserias and Dawar \cite{DefinableInapproximabilityJournal} preserves the exact satisfiability of a \threexor instance. In other words, part (2) of Lemma 3 of \cite{DefinableInapproximabilityJournal} can be strengthened, and as a consequence, the third paragraph in the proof of Lemma 4 of \cite{DefinableInapproximabilityJournal} is unnecessary.}
\end{proof}

\section{Examples}\label{secExamples}

Before presenting the general gap construction in the next section, we first describe some special cases: pairs of $C^k$-equivalent unique games instances with different optimal values, for $k \in \{2, 3\}$.

\subsection{An example where $k = 2$}\label{subKEquals2}

The label-lifted instance is useful for Duplicator since it makes any two \gug instances on the same underlying graph (in the sense of Section~\ref{subUGBackground}) locally isomorphic. For example, take $U_1$ and $U_2$ to be as in Figure~\ref{figKEquals2Example}.

For both $U_1$ and $U_2$, the underlying graph $H$ is the complete graph on 3 vertices. The group is $A := \ff_2$ (the finite field of two elements, $\{0, 1\}$, under addition modulo 2), so there are only two kinds of constraints. The identity constraints are drawn in green and the non-identity constraints are drawn in red. Instance $U_1$ has only identity constraints, so is completely satisfiable by setting all variables to be the same value. In $U_2$, the 3 constraints are inconsistent, but any 2 of them can be satisfied, so the optimal value is $\frac23$. Therefore, by Lemma~\ref{lemGSameSatisfiability}, $\opt(\mathcal{G}(U_1)) = 1$ and $\opt(\mathcal{G}(U_2)) = \frac23$. However, $\mathcal{G}(U_1)$ and $\mathcal{G}(U_2)$ are $C^2$-equivalent. Let $X$ denote the common variable set of both instances. Duplicator's strategy in the 2-pebble bijective game is to always give a bijection $f: X \to X$ (from the universe of $\mathcal{G}(U_1)$ to the universe of $\mathcal{G}(U_2)$) with the following property:
\begin{equation}\label{equKEquals2PreserveProp}
\txt{For all $v \in V(H)$, there exists $g^*(v) \in A$ such that, for any $g \in A$, $f(x_v^g) = x_v^{g + g^*(v)}$.}
\end{equation}

So, in any given round, Duplicator's bijection is completely determined by a map $g^*: V(H) \to A$. If there is one pebble pair on the board on $(x_v^{g_1}, x_v^{g_2})$, as there is in Figure~\ref{figKEquals2Example}, then for each neighbor $u$ of $v$ in $H$, $g^*(u)$ is uniquely determined to be the value which makes the bijection $f$ an isomorphism across all constraints involving vertices $u$ and $v$. In the figure, $g^*(u) = 1 \in \ff_2$ and $g^*(v) = g^*(w) = 0$. The bijection $f$ is depicted by the blue dotted lines; notice the swap at the vertices involving $u$ corresponding to $g^*(u) = 1$. No matter which of the six locations Spoiler places the pebble pair, one can verify that any relations between two pebbled vertices will be of the same kind.

\ipnc{.39}{KEquals2Example5}{\label{figKEquals2Example} Two \gug instances $U_1$ and $U_2$ with group $\ff_2$, and Duplicator's bijection at some stage in the $k$-pebble bijective game between $\mathcal{G}(U_1)$ and $\mathcal{G}(U_2)$, just after Spoiler has picked up a pair of pebbles.}

Preserving property (\ref{equKEquals2PreserveProp}) is a more general idea that will be used again in Section~\ref{secMainProof}. We do not go into the general proof, but this strategy enables Duplicator to win the 2-pebble bijective game between $\mathcal{G}(U_1)$ and $\mathcal{G}(U_2)$ as long as $U_1$ and $U_2$ are defined over the same underlying graph and that graph is simple.  It is possible to construct \gug instances $U_2$ over simple graphs with arbitrarily low satisfiability $\delta > 0$ (for example, random constraints on large complete graphs will have low satisfiability with high probability). Turning all constraints into identity constraints, we can then obtain $U_1$ such that $\opt(\mathcal{G}(U_1)) = 1$, $\opt(\mathcal{G}(U_2)) = \delta$, but $\mathcal{G}(U_1) \equiv_{C^2} \mathcal{G}(U_2)$. Thus, for $k = 2$, the objective has been accomplished: we have a $(1, \delta)$-inapproximability gap for $C^2$-definable algorithms, for arbitrarily low $\delta$.

\subsection{The challenge of $k \texorpdfstring{\geq}{≥} 3$}\label{subChallengeOfKEquals3}

To extend this result to an inapproximability gap for FPC-definable algorithms, we need $C^k$-equivalence for all $k$. Unfortunately, we hit a fundamental barrier starting at $k = 3$:

\begin{prop}\label{proUGCompletelySatisfiableInFPC}
	For any positive integer $q$, there is an FPC sentence $\phi$ expressing the property that a $\textup{\UG}(q)$ instance (encoded as a $\tauuug$-structure) is completely satisfiable. Furthermore, $\phi$ can be converted into a $C^3$ sentence.
\end{prop}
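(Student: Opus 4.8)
The plan is to characterize complete satisfiability of a $\UG(q)$ instance $U$ entirely in terms of connectivity in its \emph{label-extended graph} $\hat{H}$, and then to observe that this characterization is expressible in FPC, and in fact in $C^3$. Let $\hat H$ be the graph with vertex set $\{(v,a) : v\in V,\ a\in[q]\}$ --- read $(v,a)$ as ``the variable $x_v$ carries label $a$'' --- and an edge between $(v_1,a_1)$ and $(v_2,a_2)$ exactly when $U$ has a constraint between $x_{v_1}$ and $x_{v_2}$ consistent with these labels, i.e.\ some $\pi$ with $P_\pi(x_{v_1},x_{v_2})$ and $\pi(a_1)=a_2$, or some $\pi$ with $P_\pi(x_{v_2},x_{v_1})$ and $\pi(a_2)=a_1$. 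Call each set $\{v\}\times[q]$ a \emph{fiber}. Since $q$ is a fixed constant, $\hat H$ is nothing but $q$ disjoint copies of the universe of the input $\tauuug$-structure together with an edge relation defined by a finite disjunction of atomic formulas, so it is FO-interpretable --- a fortiori FPC-interpretable. The characterization I would prove is:
\begin{center}
$U$ is completely satisfiable $\Longleftrightarrow$ for every $v\in V$ there is a label $a\in[q]$ such that the connected component of $(v,a)$ in $\hat H$ meets each fiber in at most one vertex.
\end{center}

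The proof of this equivalence is the conceptual core. For the forward direction, take a satisfying assignment $\sigma$ and set $S_\sigma:=\{(v,\sigma(v)):v\in V\}$; a short computation shows that $S_\sigma$ is closed under the edges of $\hat H$ (if $(v_1,\sigma(v_1))$ is $\hat H$-adjacent to $(v_2,b)$ then the constraint witnessing the edge forces $b=\sigma(v_2)$, using injectivity of the relevant permutation in one of the two cases), hence $S_\sigma$ is a union of connected components of $\hat H$; as $S_\sigma$ meets each fiber exactly once, so does each component contained in it, and we may take $a=\sigma(v)$. For the converse, argue one connected component $K$ of the constraint graph at a time. Choose any $v\in K$ and the promised label $a$; the component $C$ of $(v,a)$ in $\hat H$ projects onto all of $K$ (follow a walk to each vertex of $K$), so it meets every fiber over $K$ --- at least once, and by hypothesis at most once, hence exactly once --- defining an assignment $\sigma_C\colon K\to[q]$. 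This $\sigma_C$ satisfies every constraint inside $K$: for a constraint $P_\pi(x_{w_1},x_{w_2})$, the vertex $(w_1,\sigma_C(w_1))\in C$ is adjacent to $(w_2,\pi(\sigma_C(w_1)))$, which therefore also lies in $C$ and hence equals the unique $C$-vertex $(w_2,\sigma_C(w_2))$ in that fiber, giving $\pi(\sigma_C(w_1))=\sigma_C(w_2)$. Gluing the $\sigma_C$'s over the (disjoint) components $K$ --- each constraint lives inside exactly one of them --- produces a satisfying assignment for $U$.

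Granting the characterization, the FPC sentence $\phi$ is straightforward: the ``same connected component'' relation on $\hat H$ is FPC-definable by a least fixed point (concretely, a simultaneous fixed point of $q^2$ binary relations $R_{ij}$ on the universe, with $R_{ij}(v,w)$ meaning that $(v,i)$ reaches $(w,j)$), and on top of it the condition ``for all $v$ there is a label $a$ such that no two distinct same-fiber vertices are both reachable from $(v,a)$'' is purely first-order, the quantifications over labels expanding to finite Boolean combinations; notably, no counting quantifier is needed. For the $C^3$ claim I would write this out directly as a $3$-variable infinitary sentence: transitive closure of a binary relation is expressible in $\mathcal L^3_{\infty\omega}$ by the familiar re-use of three variables, and the surrounding formula
\[
\forall y\,\exists y'\Big(\mathrm{SameFiber}(y,y')\ \wedge\ \neg\,\exists z\,\exists z'\big(z\ne z'\ \wedge\ \mathrm{SameFiber}(z,z')\ \wedge\ \mathrm{Reach}(y',z)\ \wedge\ \mathrm{Reach}(y',z')\big)\Big)
\]
can be brought down to three variables by renaming the internal variable of each occurrence of $\mathrm{Reach}$ (it is local to its sub-formula and may shadow an outer quantifier) and by reusing the name $y$ inside the scope of $\exists y'$, where $y$ no longer occurs free; the result is an $\mathcal L^3_{\infty\omega}$ --- hence $C^3$ --- sentence. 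I expect the main obstacle to be getting the characterization exactly right rather than any of this logical bookkeeping: the naive guess ``$U$ is completely satisfiable iff no component of $\hat H$ meets a fiber twice'' is false --- a single variable with a self-loop constraint whose permutation has a fixed point but is not the identity is completely satisfiable yet produces a component meeting its fiber twice --- so the statement genuinely needs the existential ``there is a label $a$'' over each fiber, and verifying both directions in the presence of self-loops, parallel edges, and a disconnected constraint graph is where care is required.
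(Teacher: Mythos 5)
Your proposal is correct and is essentially the paper's own proof in different clothing: the paper's fixed-point relations $U_{i,j}(y)$ (``if $x$ has label $i$ then the constraints force $y$ to have label $j$'') are precisely reachability from $(x,i)$ to $(y,j)$ in your label-extended graph, and both arguments then existentially choose a starting label for each variable, define these reachability relations by a simultaneous (Bekic-nested) fixed point indexed by the constant-size label set, and unwind to an infinitary three-variable counting sentence. The only substantive difference is that you test for a clash in \emph{every} fiber met by the component (which is why you need the variable-reuse trick to get back down to three variables), whereas the paper tests only the fiber of $x$ itself; both characterizations are correct.
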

\begin{proof}
	For each fixed label $i \in [q]$, in the context of a free variable $x$, we define $q$ unary relations $U_{i, 1}, U_{i, 2}, \dots, U_{i, q}$ by simultaneous induction:
	\begin{align*}
	U_{i, i}(y) &\impliedby (x = y)\\
	U_{i, j}(y) &\impliedby \exists z \bigvee_{\pi: [q] \to [q]} \left(P_\pi(y, z) \wedge U_{i, \pi(j)}(z)\right)
	\end{align*}
	The meaning of $U_{i, j}(y)$ is that, \emph{given $x$ has label $i$, it is implied by the constraints that $y$ has label $j$}. Thus, $U_{i, i}(x)$ is defined to be true, and whenever a constraint $P_\pi$ holds on a pair of elements $(y, z)$ and we know what the label of $z$ must be, we inductively derive what the label of $y$ must be. We claim that the following sentence expresses the property that a $\textup{\UG}(q)$ instance is completely satisfiable:
	\[\phi \equiv \forall x \bigvee_{i \in [q]} \bigwedge_{\substack{j \in [q]\\j \neq i}} \neg U_{i, j}(x)\]
	The correctness of this formula relies on the observation that a \ug instance is \emph{not} completely satisfiable if and only if there exists an inconsistent cycle of constraints, forcing some variable $x$ to be assigned two different values. If $f$ is a satisfying assignment, then picking $i = f(x)$ must satisfy $\phi$. Conversely, if $\phi$ is satisfied, one can obtain a satisfying assignment by picking one $x$ from each connected component of the underlying graph and one satisfying witness $i$, then assigning labels to every $y$ in that component by taking the unique $j$ such that $U_{i, j}(y)$ holds. (It is not too hard to see that the component being connected implies $j$ exists, and $\phi$ being satisfied implies $j$ is unique).
	
	Using the \emph{Bekic principle} \cite[Lemma 1.4.2]{RudimentsOfMuCalculus} \cite[Lemma 10.9]{FMT}, the simultaneous inductions can be nested within each other in a way that reuses variable names, resulting in least fixed-point (LFP) formulas for each of the $q$ relations, still using only 3 variables ($x$, $y$ and $z$). Thus, $\phi$ can indeed be formally written as an LFP sentence, which is an FPC sentence. Finally, $\phi$ can be translated into a $C^3$ sentence by unwrapping the inductive definitions into infinite disjunctions, which again does not require any additional variables since none of the parameter variables are re-quantified within the inductive definitions, so there will be no clashes.
\end{proof}

This implies that there does not exist an FPC-inapproximability gap of $(1, \delta)$ for any $\delta < 1$ (this is sometimes referred to as ``perfect completeness"), since $\phi$ distinguishes an instance of optimal value 1 from an instance of optimal value less than 1. In terms of the 3-pebble bijective game, Spoiler's winning strategy is to drop the first pebble pair anywhere, then use the other two pebble pairs to traverse an inconsistent cycle of constraints in the unsatisfiable instance.

\subsection{An example where $k = 3$}\label{subKEquals3}

From Duplicator's perspective, the difficulty discussed above stems from the fact that the $g^*$ map from (\ref{equKEquals2PreserveProp}) is uniquely defined from neighboring vertices. To circumvent this obstacle and give Duplicator more choices, we place a \emph{bundle} of parallel constraints between some pairs of variables. Unfortunately, this comes with a price: since at most one constraint from each bundle can be satisfied, we will necessarily have $\opt(\mathcal{G}(U_1)) < 1$. Notice how this avoids the perfect completeness issue from Proposition~\ref{proUGCompletelySatisfiableInFPC}.

Figure~\ref{figK4Example} shows an example of the parallel constraint method where the underlying graph is $H := K_4$. The group used is the additive part of $\ff_2^2$, the 2-dimensional vector space over the 2-element field. We denote its elements as binary strings $\{00, 01, 10, 11\}$, so that the group operation is bitwise XOR.

Since the constraints of $U_1$ come in inconsistent pairs, at most half of them can be satisfied. This upper bound is attainable by assigning all variables to be $00$. Thus, $\opt(U_1) = \frac12$. On the other hand, this assigment satisfies only $\frac{5}{12}$ of the edges in $U_2$, and indeed, it can be verified by exhaustive search that $\opt(U_2) = \frac{5}{12}$.

\ipnc{1.1}{K4ExampleNewNotationCropped}{\label{figK4Example}Two \gug instances $U_1$ and $U_2$ of different optimal values such that $\mathcal{G}(U_1) \equiv_{C^3} \mathcal{G}(U_2)$. The constraints differ only on the bottom two edges.}

Duplicator will follow a similar strategy as in the previous example. Suppose that, at some stage in the 3-pebble bijective game played on $\mathcal{G}(U_1)$ and $\mathcal{G}(U_2)$, there are two pebble pairs already on the board, on variables involving vertices $v_1$ and $v_3$, and Spoiler has just picked up the third pebble pair. This means that $g^*(v_1)$ and $g^*(v_3)$ are already determined, so Duplicator just needs to fill in $g^*(v_2)$ and $g^*(v_4)$. Let us discuss how Duplicator determines $g^*(v_4)$.

Duplicator will set $g^*(v_4)$ so that the map
\[f(x_v^g) := x_v^{g + g^*(v)}\]
is a partial isomorphism among all constraints involving $v_4$ and either $v_1$ or $v_3$. (In particular, this will ensure $f$ is a partial isomorphism between the pebbled variables if Spoiler places the pebble pair down on variables involving $v_4$.) For the constraints involving $v_3$ and $v_4$, $f$ will be a partial isomorphism if, for any $g_3, g_4 \in \ff_2^2$, the following double implication holds:
\begin{align*}
x_{v_3}^{g_3} - x_{v_4}^{g_4} = z &\txt{ is an equation in } \mathcal{G}(U_1)\\
\iff f(x_{v_3}^{g_3}) - f(x_{v_4}^{g_4}) = z &\txt{ is an equation in } \mathcal{G}(U_2).
\end{align*}
Expanding out the definitions of $f$ and $\mathcal{G}$, we can rewrite this condition as
\begin{align*}
(x_{v_3} + g_3) - (x_{v_4} + g_4) = z &\txt{ is an equation in } U_1\\
\iff (x_{v_3} + g_3 + g^*(v_3)) - (x_{v_4} + g_4 + g^*(v_4)) = z &\txt{ is an equation in } U_2.
\end{align*}
Rearranging equations, and using the fact that the equations in $U_2$ between $x_{v_3}$ and $x_{v_4}$ are precisely the same equations in $U_1$ plus a difference of $10$, this becomes
\begin{align*}
x_{v_3} - x_{v_4} = z - g_3 + g_4 &\txt{ is an equation in } U_1\\
\iff x_{v_3} - x_{v_4} = z - g_3 + g_4 - g^*(v_3) + g^*(v_4) + 10 &\txt{ is an equation in } U_1.
\end{align*}
Notice that, in $U_1$, two equations of the form $x_{v_3} - x_{v_4} = z_1$ and $x_{v_3} - x_{v_4} = z_2$ both appear as constraints or both do not appear as constraints if and only if $z_1$ and $z_2$ are the same, or differ by $01$. Thus, $f$ is a partial isomorphism for constraints involving $v_3$ and $v_4$ if and only if
\[g^*(v_4) - g^*(v_3) + 10 \in \{00, 01\}.\]
By an analogous argument, one can derive that $f$ is a partial isomorphism for constraints involving $v_1$ and $v_4$ if and only if
\[g^*(v_4) - g^*(v_1) \in \{00, 11\}.\]
These two constraints on $g^*(v_4)$ can be written as a system of 2 equations with 2 unknowns:
$$\mTwoTwo{1}{0}{1}{1} \vectTwo{g^*(v_4)_1}{g^*(v_4)_2} = \vectTwo{g^*(v_3)_1 + 1}{g^*(v_1)_1 + g^*(v_1)_2}$$
The system always has a solution since the matrix on the left-hand side has full rank, which is ultimately due to the fact that the subspaces $\{00, 01\}$ and $\{00, 11\}$ are orthogonal. Thus, no matter what $g^*(v_1)$ and $g^*(v_3)$ are, we have seen that Duplicator can find a value for $g^*(v_4)$ making $f$ an isomorphism over all constraints along the path $(v_1, v_4, v_3)$. For example, if $g^*(v_3) = 00$ and $g^*(v_1) = 01$, then we set $g^*(v_4) = 10$. Duplicator can use the same process to find a value for $g^*(v_2)$ making $f$ an isomorphism over the path $(v_1, v_2, v_3)$.

\section{Proof of FPC-inapproximability gap}\label{secMainProof}

In comparison to the previous example, the general construction replaces $\ff_2^2$ with $\fm$ and has a bundle of $2^\ell$ constraints between every pair of adjacent vertices, for integers $0 < \ell < m$. A key difference is that it is no longer possible to make $f$ an isomorphism over an arbitrary path of length 2 (such as $(v_1, v_4, v_3)$ in the example) given arbitrary values for the endpoints. However, we show that this \emph{is} possible for all paths in the graph that are sufficiently long (see Lemma~\ref{lemNUGLGRadiusProperty}). This allows Duplicator to win as long as the base graph $H$ has suitably high girth.

\subsection{Construction}\label{subConstruction}

Our construction takes three parameters: $\varepsilon \in (0, \frac12)$, $\delta > 0$, and a positive integer $\ell$. Note that $\delta$ and $\ell$ affect the satisfiability of the \gug instances produced, whereas $\varepsilon$ is an arbitrary constant that could just as well be fixed at $\frac14$. The construction is probabilistic, failing with probability at most $2\varepsilon$. We let
\begin{equation}\label{equChooseD}
d := 2^\ell + 1,
\end{equation}
which will be the degree of the vertices in the underlying graph. The reader may find it helpful to imagine $\ell = 1$ and $d = 3$ throughout this section. Next, we let
\begin{equation}\label{equChooseGamma}
\gamma := 1 - \frac{\left(\soundness + \frac{\delta}{2^\ell}\right)}{\left(\soundness + \delta\right)}.
\end{equation}
Note that $\gamma > 0$. This is just a technical upper bound on the fraction of edges in the graph that have a certain undesirable property (see Lemma~\ref{lemNUGLGMostEdgesGood}). Finally, we pick two very large integers $0 \ll m \ll r$; specifically,
\begin{equation}\label{equChooseM}
m := \left\lceil \frac{1}{\delta} + \left(\frac{2}{\delta d} + 1\right)\ell - \frac{2}{\delta d} \log_2(\varepsilon) \right\rceil,
\end{equation}
and
\begin{equation}\label{equChooseR}
r := \left\lceil\frac{2^{m\ell + 1}(d^{m - \ell} - 1)}{\gamma \varepsilon}\right\rceil + 1.
\end{equation}

The \gug instances we construct all have the additive part of the vector space $\fm$ as the group, so $q := 2^m$. The parameter $m$ is chosen so that $\mathbb{B}_k = \mathcal{G}(U_2)$ will be highly unsatisfiable with high probability. The meaning of the parameter $r$ will be discussed shortly.

For any $k$, let $\widetilde{H} = \widetilde{H}_k$ be a $d$-regular simple graph of girth at least $(k + 1)^2 r$. (Regular graphs of arbitrarily high girth and degree are known to exist; see Lazebnik, Ustimenko and Woldar \cite{DenseGraphsHighGirth}, for example.) The final graph we use for the construction, $H$, will later be obtained from $\widetilde{H}$ by removing certain edges. For every edge $\{v_1, v_2\} \in E(\widetilde{H})$, we independently choose a uniformly random vector $b(v_1, v_2) = b(v_2, v_1) \in \fm$ and a uniformly random $\ell$-dimensional subspace\footnote{What this means is, randomly choose a set of $\ell$ linearly independent vectors and take the span. Choose the first vector uniformly at random from $\fm \setminus \{0\}$, then choose each subsequent vector uniformly at random from the subset of $\fm$ that is not in the span of the previously chosen vectors. } $Z(v_1, v_2) = Z(v_2, v_1) \subseteq \fm$. Say that an edge $e \in E(\widetilde{H})$ is \emph{good} if, for all paths $(v_0, \seq{v}{r})$ of length $r$ such that $e = \{v_0, v_1\}$, the set
\[\bigcup_{1 \leq i \leq r} Z(v_{i - 1}, v_i)\]
spans $\fm$ (recall that paths are never allowed to repeat edges). Edges of $\widetilde{H}$ which are not good edges are called \emph{bad} edges. We will shortly argue that, with high probability, nearly all edges are good edges.

Let $H = H_k$ be the graph with vertex set $V(H) := V(\widetilde{H})$ and edge set
\[E(H) := \{e \in E(\widetilde{H}) \suchthat \txt{$e$ is a good edge}\}.\]

We define \gug instances $U_1$, $\widetilde{U}_1$, $U_2$ and $\widetilde{U}_2$ using the additive group structure on $\fm$. The variable sets of all four instances are
\[\{x_v \suchthat v \in V(H)\}.\]
For every edge $\{v_1, v_2\} \in E(\widetilde{H})$, $\widetilde{U}_1$ and $\widetilde{U}_2$ have a bundle of $2^\ell$ constraints between the corresponding variables. In $\widetilde{U}_1$, the constraints are
\[\{x_{v_1} - x_{v_2} = z \suchthat z \in Z(v_1, v_2)\},\]
whereas in $\widetilde{U}_2$, the constraints are
\[\{x_{v_1} - x_{v_2} = z + b(v_1, v_2) \suchthat z \in Z(v_1, v_2)\}.\]
Finally, $U_1$ and $U_2$ are obtained from $\widetilde{U}_1$ and $\widetilde{U}_2$ by removing all constraints on pairs of variables corresponding to bad edges, i.e., with constraints defined in the exact same way as $\widetilde{U}_1$ and $\widetilde{U}_2$, but only for edges $\{v_1, v_2\} \in E(H)$. Our pair of $C^k$-equivalent instances will be $\mathbb{A}_k := \mathcal{G}(U_1)$, and $\mathbb{B}_k := \mathcal{G}(U_2)$.

The example from Section~\ref{subKEquals3} is a special case of this construction where $k := 3$, $H := K_4$, $m := 2$, $\ell := 1$, and $r := 2$. The 1-dimensional $Z$-subspaces consist of the vectors drawn on each edge in $U_1$ in Figure~\ref{figK4Example}, with $b(v_3, v_4) := 10$ and all other $b$ vectors $00$.

\subsection{Satisfiability properties}\label{subProperties}

In the next section we prove that $\mathcal{G}(U_1)$ and $\mathcal{G}(U_2)$ are always $C^k$-equivalent. But first, we now analyze the optimal values of these two instances.

\begin{lem}\label{lemNUGLGCompleteness}
	The satisfiability of $U_1$ (and thus of $\mathcal{G}(U_1)$) is exactly $\frac{1}{2^\ell}$.
\end{lem}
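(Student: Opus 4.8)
The claim has two parts: an upper bound $\opt(U_1) \le \frac{1}{2^\ell}$ and a matching lower bound, plus the observation that these transfer to $\mathcal{G}(U_1)$ via Lemma~\ref{lemGSameSatisfiability}. The last part is immediate: since $U_1$ is a \gug instance, Lemma~\ref{lemGSameSatisfiability} gives $\opt(\mathcal{G}(U_1)) = \opt(U_1)$, so it suffices to compute $\opt(U_1)$.

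For the upper bound, the key structural fact is that the constraints of $U_1$ come in \emph{bundles}: for each good edge $\{v_1, v_2\} \in E(H)$, the bundle consists of the $2^\ell$ constraints $\{x_{v_1} - x_{v_2} = z : z \in Z(v_1, v_2)\}$, where $Z(v_1, v_2)$ is an $\ell$-dimensional subspace of $\fm$, hence has exactly $2^\ell$ distinct elements. Under any fixed assignment, the value $x_{v_1} - x_{v_2}$ is a single element of $\fm$, so at most one of the $2^\ell$ equations in each bundle can hold. Since every constraint of $U_1$ belongs to exactly one bundle, and all bundles have the same size $2^\ell$, no assignment satisfies more than a $\frac{1}{2^\ell}$ fraction of the constraints; thus $\opt(U_1) \le \frac{1}{2^\ell}$.

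For the lower bound, I would exhibit an assignment achieving this fraction. Take the all-zeros assignment $x_v := 0$ for every $v \in V(H)$. Then for each edge $\{v_1, v_2\} \in E(H)$ we have $x_{v_1} - x_{v_2} = 0$, and since $Z(v_1, v_2)$ is a linear (not merely affine) subspace, $0 \in Z(v_1, v_2)$; hence exactly one constraint in each bundle — the one with $z = 0$ — is satisfied. So this assignment satisfies exactly a $\frac{1}{2^\ell}$ fraction of all constraints, giving $\opt(U_1) \ge \frac{1}{2^\ell}$. Combining the two bounds yields $\opt(U_1) = \frac{1}{2^\ell}$, and therefore $\opt(\mathcal{G}(U_1)) = \frac{1}{2^\ell}$ as well.

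There is essentially no serious obstacle here; the only point requiring a moment's care is making sure the counting in the upper bound is exact rather than approximate — i.e., that \emph{every} constraint lies in a bundle of size exactly $2^\ell$ (which holds because subspaces are chosen to be genuinely $\ell$-dimensional, so $|Z(v_1,v_2)| = 2^\ell$, and the constraints within a bundle are distinct as $z$ ranges over distinct elements of $Z(v_1,v_2)$). One might also note in passing that $E(H)$ could in principle be empty if all edges were bad, but this case is vacuous/degenerate and in any event the probabilistic analysis (referenced in the construction) ensures nearly all edges are good; the statement as phrased is about the instance $U_1$ that is actually constructed, so the ratio $\frac{1}{2^\ell}$ holds regardless of how many good edges there are, as long as there is at least one.
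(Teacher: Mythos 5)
Your proposal is correct and follows essentially the same argument as the paper: at most one constraint per bundle of size $2^\ell$ can be satisfied, the all-zeros assignment attains this by satisfying the $z = 0$ constraint in every bundle (since each $Z(v_1, v_2)$ is a linear subspace containing $0$), and the claim transfers to $\mathcal{G}(U_1)$ by Lemma~\ref{lemGSameSatisfiability}. Your extra remarks on the exactness of the bundle size and the (degenerate) possibility of an empty edge set are fine but not needed beyond what the paper states.
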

\begin{proof}
	Since the constraints in each bundle of $U_1$ are pairwise incompatible with each other, at most one constraint can be satisfied from each bundle, so the total satisfiability is at most the bundle size $\frac{1}{2^\ell}$. The assignment $x_v := 0$ attains this bound by satisfying the $z = 0$ constraint in each bundle (every subspace $Z(v_1, v_2)$ must contain $z = 0$). The claim about $\mathcal{G}(U_1)$ follows from Lemma~\ref{lemGSameSatisfiability}.
\end{proof}

We next argue that, with high probability, $\mathcal{G}(U_2)$ has a significantly lower optimal value. The analysis requires many steps, so we break the proof up over several lemmas. Say that a bundle of constraints is ``satisfied" if at least one constraint in the bundle is satisfied. We begin by showing in Lemma~\ref{lemNUGLGMostEdgesGood} that, with high probability, most edges of $\widetilde{H}$ are good edges, in which case the satisfiability of $\widetilde{U}_2$ closely approximates the satisfiability of $U_2$. To analyze the expected satisfiability of $\widetilde{U}_2$, in Lemma~\ref{lemSoundnessU2Tilde} we observe that any assignment satisfying significantly more than a
\[\soundness = \frac{1}{2^\ell} \cdot \frac2d\]
fraction of constraints must completely satisfy all of the bundles over some subgraph of $\widetilde{H}$ containing significantly more than a $\frac2d$-fraction of the edges of $\widetilde{H}$. Such a subgraph must contain many more edges than vertices, so it is highly improbable that a random assignment will completely satisfy it. We then apply a union bound over all assignments and all subgraphs to conclude that, with high probability, no assignment satisfies more than a $\soundness$ fraction of constraints.

In order to show that most edges are good edges, we will require the following technical lemma.

\begin{lem}\label{lemWierdIntegerBound}
	For any $d \geq 3$ and positive integer $n$,
	\[\frac{d^n - d^{n - 1}}{(d - 1)^n} + d^{n - 1} - 1 \leq \frac{d^n - 1}{d - 1}.\]
\end{lem}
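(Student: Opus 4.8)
The plan is to prove the inequality $\frac{d^n - d^{n-1}}{(d-1)^n} + d^{n-1} - 1 \leq \frac{d^n - 1}{d-1}$ by reducing it to two much cruder estimates that hold separately for each of the two terms on the left-hand side. First I would note that $\frac{d^n - 1}{d-1} = 1 + d + d^2 + \cdots + d^{n-1}$, so the target right-hand side is just the geometric sum $\sum_{i=0}^{n-1} d^i$. The strategy is then to bound the first summand $\frac{d^n - d^{n-1}}{(d-1)^n} = \frac{d^{n-1}(d-1)}{(d-1)^n} = \frac{d^{n-1}}{(d-1)^{n-1}} = \left(\frac{d}{d-1}\right)^{n-1}$ by something small, and to observe that the remaining term $d^{n-1} - 1$ on the left is already at most $\sum_{i=0}^{n-2} d^i = \frac{d^{n-1}-1}{d-1}$ when $d \geq 2$ (since $d^{n-1} - 1 = (d-1)(d^{n-2} + \cdots + 1)$, which is $\leq d^{n-2} + \cdots + 1$ precisely when $d - 1 \leq 1$, i.e.\ never for $d \geq 3$ — so this naive split does \emph{not} quite work and needs adjustment).

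So the cleaner approach: rewrite the claim as $\left(\frac{d}{d-1}\right)^{n-1} \leq \frac{d^n - 1}{d-1} - d^{n-1} + 1$. The right-hand side equals $\sum_{i=0}^{n-1} d^i - d^{n-1} + 1 = 1 + \sum_{i=0}^{n-2} d^i = 2 + d + d^2 + \cdots + d^{n-2}$ (interpreting the sum as $0$ when $n=1$, giving right-hand side $1$). For $n = 1$ the left-hand side is $\left(\frac{d}{d-1}\right)^0 = 1$, so equality holds and the base case is immediate. For $n \geq 2$, it suffices to show $\left(\frac{d}{d-1}\right)^{n-1} \leq 2$, since the right-hand side $2 + d + \cdots + d^{n-2} \geq 2$. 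Now $\left(\frac{d}{d-1}\right)^{n-1} \leq 2$ is exactly $n - 1 \leq \log_2\!\left(\frac{d}{d-1}\right)^{-1}$... which fails for large $n$. So a uniform bound by $2$ is also too weak.

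The honest route, then, is induction on $n$ with the full strength of the geometric series retained. I would prove the equivalent form $\left(\frac{d}{d-1}\right)^{n-1} \leq 1 + \sum_{i=0}^{n-2} d^i$ by induction on $n \geq 1$. The base case $n=1$ gives $1 \leq 1$. For the inductive step, assume $\left(\frac{d}{d-1}\right)^{n-1} \leq 1 + \sum_{i=0}^{n-2} d^i$; multiplying both sides by $\frac{d}{d-1}$ gives $\left(\frac{d}{d-1}\right)^{n} \leq \frac{d}{d-1}\left(1 + \sum_{i=0}^{n-2} d^i\right) = \frac{d}{d-1} + \sum_{i=1}^{n-1} \frac{d^{i+1}}{d-1} \cdot \frac{1}{d^{\,0}}$, and I would then need to check that this is at most $1 + \sum_{i=0}^{n-1} d^i = 2 + \sum_{i=1}^{n-1} d^i$. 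Since $\frac{d}{d-1} \leq 2$ for $d \geq 3$ (indeed $\frac{d}{d-1} \leq \frac32$), and since $\frac{d}{d-1}\cdot d^i = d^i + \frac{d^i}{d-1} \leq d^i + d^{i-1} \leq d^i + d^i$ — more carefully, $\frac{d^{i+1}}{d-1} \leq d^{i+1}$ is false, but $\frac{d^{i+1}}{d-1} = d^i \cdot \frac{d}{d-1} \leq \frac{3}{2} d^i < 2 d^i$, and summing $\sum_{i=0}^{n-2} \frac{d^{i+1}}{d-1}$ against $\sum_{i=1}^{n-1} d^i + 1$ term-by-term closes the gap because the extra slack $1$ (from $2$ versus $\frac{d}{d-1}$ in the constant term) dominates the accumulated excess. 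I expect the main obstacle to be precisely this bookkeeping in the inductive step: matching the geometrically-weighted error terms $\frac{d^{i+1}}{d-1} - d^{i+1}$ against the available slack, which requires care with the indices but no deep idea. Alternatively — and this may be the slickest finish — one can avoid induction entirely by observing $\frac{d^n - 1}{d-1} - (d^{n-1} - 1) = \frac{d^{n-1} - 1}{d-1} = \sum_{i=0}^{n-2} d^i \geq d^{n-2} \geq \left(\frac{d}{d-1}\right)^{n-1}$ whenever $d^{n-2}(d-1)^{n-1} \geq d^{n-1}$, i.e.\ $(d-1)^{n-1} \geq d$, which holds for all $d \geq 3$ and $n \geq 2$ since $(d-1)^{n-1} \geq d - 1 \geq 2$... again not quite enough, so one actually wants $(d-1)^{n-1} \geq d$, true when $d \geq 3, n \geq 3$ and checkable directly for $n = 2$ ($d - 1 \geq d$ is false, so $n=2$ needs the extra terms in the sum). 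The cleanest writeup will handle $n \leq 2$ by direct computation and $n \geq 3$ by the one-line bound $\sum_{i=0}^{n-2} d^i \geq d^{n-2} \geq \left(\tfrac{d}{d-1}\right)^{n-1}$, the last step being $(d-1)^{n-1} \geq d$ for $d \geq 3$, $n \geq 3$.
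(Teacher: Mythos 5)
Your final route---checking $n=1,2$ directly and, for $n\geq 3$, bounding the reformulated term $\left(\frac{d}{d-1}\right)^{n-1}\leq d^{n-2}$ (equivalently $(d-1)^{n-1}\geq d$) and absorbing it into the geometric-series slack $\frac{d^n-1}{d-1}-(d^{n-1}-1)=1+\sum_{i=0}^{n-2}d^i$---is correct and is essentially the paper's own proof, whose key step $d^n\leq d^{n-1}(d-1)^{n-1}$ is the same inequality in disguise, with your geometric-series finish merely replacing the paper's multiply-and-rearrange bookkeeping. One harmless slip: in your last paragraph the difference $\frac{d^n-1}{d-1}-(d^{n-1}-1)$ equals $1+\frac{d^{n-1}-1}{d-1}$, not $\frac{d^{n-1}-1}{d-1}$, but since the true quantity is larger than what you wrote, the chain of inequalities still closes.
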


\begin{proof}
	When $n = 1$, both sides are 1. When $n = 2$, the inequality reads
	\[\frac{d}{d - 1} + d - 1 \leq d + 1,\]
	which is clearly true for $d \geq 3$. So assume $n \geq 3$. Then, using the fact that $d \geq 3$, we have
	\begin{align*}
	(d - 1)^{n - 1} &\geq (d - 1)^2 = (d - 1)(d - 1)\\
	&\geq (d - 1)(3 - 1) = d + (d - 2)\\
	&\geq d + (3 - 2) \geq d.
	\end{align*}
	It follows that
	\[d^n \leq d^{n - 1}(d - 1)^{n - 1}.\]
	Multiplying this inequality by the equation $1 = d - (d - 1)$, we have
	\[d^n \leq d^{n}(d - 1)^{n - 1} - d^{n - 1}(d - 1)^n.\]
	Adding the inequalities $0 \leq (d - 1)^n$ and $(d - 1)^{n - 1} \leq d^{n - 1}$, we obtain
	\[d^n + (d - 1)^{n - 1} \leq d^{n}(d - 1)^{n - 1} - d^{n - 1}(d - 1)^n + (d - 1)^n + d^{n - 1}.\]
	Rearranging,
	\[d^n - d^{n - 1} + d^{n - 1}(d - 1)^n - (d - 1)^n \leq d^{n}(d - 1)^{n - 1} - (d - 1)^{n - 1}.\]
	Finally, dividing both sides by $(d - 1)^n$ yields the desired inequality.
\end{proof}

\begin{lem}\label{lemNUGLGMostEdgesGood}
	The probability that more than a $\gamma$ fraction of the edges of $\widetilde{H}$ are bad edges is at most $\varepsilon$.
\end{lem}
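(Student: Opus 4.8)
The plan is to bound the probability that a single fixed edge is bad and then conclude by Markov's inequality. Since $\widetilde H$ has girth at least $(k+1)^2 r > 2r$, the subgraph spanned by all length-$r$ paths leaving a given edge $e$ is a tree, so the event ``$e$ is bad'' depends only on the (i.i.d.\ uniform) subspace labels on this tree; in particular $\rho := \Pr[e \text{ is bad}]$ does not depend on $e$. Then $\mathbb{E}[\#\text{bad edges}] = \rho\,|E(\widetilde H)|$, so Markov's inequality gives $\Pr[\#\text{bad edges} > \gamma|E(\widetilde H)|] \le \rho/\gamma$, and it suffices to show $\rho \le \gamma\varepsilon$.

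To estimate $\rho$, fix $e = \{v_0,v_1\}$. By definition, $e$ is bad iff some length-$r$ path $(v_0,v_1,\dots,v_r)$ from $e$ has $\bigcup_{i} Z(v_{i-1},v_i)$ not spanning $\fm$, equivalently iff there exist such a path and a hyperplane $W\subsetneq\fm$ with $Z(v_{i-1},v_i)\subseteq W$ for every $i$. Union-bounding over the $2^m-1$ hyperplanes, $\rho \le (2^m-1)\Pr[A_W]$, where $A_W$ is the event that the edges carrying subspaces contained in a fixed hyperplane $W$ contain a length-$r$ path from $e$. This reduction is the crucial step: a direct union bound over the roughly $(d-1)^{r-1} = 2^{\ell(r-1)}$ length-$r$ paths is useless, since a single path fails to span with probability only about $2^{-\ell r}$ and the two quantities essentially cancel; routing through hyperplanes cuts the outer union down to $2^m$ terms.

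Now $A_W$ is a bond-percolation event on the depth-$r$ tree of paths leaving $e$: declare an edge \emph{open} when its subspace-label lies in $W$, which occurs independently with probability $p := \Pr[\,\text{a uniform }\ell\text{-subspace of }\fm \subseteq W\,]$. Since $d-1 = 2^\ell$ by~\eqref{equChooseD}, a short computation with Gaussian binomials gives $(d-1)p = \prod_{i=0}^{\ell-1}\bigl(1 - \tfrac{1}{2^{m-i}-1}\bigr) < 1$, so $p < 2^{-\ell}$ and the process is (barely) subcritical. Recursing over the levels of the tree and using $1-(1-x)^{d-1} \le (d-1)x$ at each step yields $\Pr[A_W] \le 2p\,((d-1)p)^{r-1}$, a quantity decaying geometrically in $r$. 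Hence $\rho \le (2^m-1)\cdot 2p\,((d-1)p)^{r-1} < 2^{m-\ell+1}\,((d-1)p)^{r-1}$, and since $1-(d-1)p \ge \tfrac{1}{2^{m-\ell+1}-1}$ this is below $\gamma\varepsilon$ as soon as $r$ exceeds a quantity of order $2^{m-\ell}\log\!\bigl(2^m/(\gamma\varepsilon)\bigr)$; the value of $r$ fixed in~\eqref{equChooseR} is comfortably large enough, the routine arithmetic being isolated in Lemma~\ref{lemWierdIntegerBound}.

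The main obstacle is precisely this near-criticality. The construction forces $d-1 = 2^\ell$ to match the typical growth rate of the subspace union, so the percolation ratio $(d-1)p$ falls below $1$ only by about $2^{-m}$; this is why $r$ must be taken exponentially large in $m$ (to beat the $2^m$ hyperplane union bound against a ratio so close to $1$), and correspondingly why $m$ is chosen large relative to $\ell,\delta,\varepsilon$ in~\eqref{equChooseM}. All the other ingredients---the passage to a single edge, the passage to hyperplanes, and the tree recursion---are soft.
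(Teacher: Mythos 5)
Your proposal is correct, but the core estimate goes by a genuinely different route than the paper's. Both arguments share the outer shell: bound the probability $\rho$ that a single edge is bad by $\gamma\varepsilon$, then pass to the fraction of bad edges via expectation and Markov's inequality. For the inner estimate, however, the paper never passes to hyperplanes: it defines a potential $X_i = \sum_{p}\bigl(d^{m-\dim(p)}-1\bigr)$, summed over all length-$i$ paths leaving the chosen edge, and proves (Claim \ref{claExpectationDecreasing}) that conditioned on $X_{i-1}=x\ge 1$ one has $\ee[X_i]\le x-2^{-m\ell}$; nonnegativity of $X_r$ then forces $\Pr[X_r\ge 1]\le\gamma\varepsilon$, and Lemma \ref{lemWierdIntegerBound} is the arithmetic engine inside that drift computation. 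Your route --- failure to span iff containment in one of the $2^m-1$ hyperplanes, then subcritical bond percolation on the $(d-1)$-ary tree with per-edge probability $p$ satisfying $(d-1)p=\prod_{i=0}^{\ell-1}\bigl(1-\tfrac{1}{2^{m-i}-1}\bigr)<1$ --- is sound: the girth bound does make the $r$-neighbourhood of an edge a tree carrying i.i.d.\ labels, the recursion giving $\Pr[A_W]\le 2p\,((d-1)p)^{r-1}$ is valid, and $1-(d-1)p\ge(2^{m-\ell+1}-1)^{-1}$ is right. Your bound needs roughly $r-1\ge 2^{m-\ell+1}\ln\bigl(2^{m-\ell+1}/(\gamma\varepsilon)\bigr)$, and the $r$ of (\ref{equChooseR}) satisfies $r-1\ge 2^{m\ell+1}(d^{m-\ell}-1)/(\gamma\varepsilon)\ge 2^{2m-\ell+1}/(\gamma\varepsilon)$, which easily exceeds that threshold, so your assertion that it suffices is true; note, though, that Lemma \ref{lemWierdIntegerBound} is not the tool for that check (it is an inequality tailored to the paper's drift bound), so the short verification has to be done directly. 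What each approach buys: yours is quantitatively stronger --- an $r$ of order $2^{m-\ell}\bigl(m+\log(1/(\gamma\varepsilon))\bigr)$ would do, hence a much smaller girth requirement --- while the paper's supermartingale argument avoids the hyperplane union bound and the near-criticality bookkeeping at the price of an exponentially larger $r$, which is harmless there since $r$ only feeds into the girth of $\widetilde{H}$.
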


This is our most difficult lemma, so we first give some intuition behind the proof. Consider the special case where $m = 2$, $\ell = 1$ and $d = 3$. In other words, $\widetilde{H}$ is a 3-regular graph with each edge labeled by one of the three one-dimensional subspaces of $\ff_2^2$. Observe that $d = 2^\ell + 1$, as in (\ref{equChooseD}). Choose a random edge $e^* = \{u, v\} \in E(\widetilde{H})$. Recall that the definition of a good/bad edge depends on the parameter $r$ from our construction, which is the length of paths to consider. We imagine starting with $r = 1$ and increasing $r$ until (hopefully) $e^*$ becomes a good edge. Specifically, for any positive integer $r$, we let $X_r$ be the random variable defined as the number of paths of length $r$ starting from $e^*$ in which the union of the subspaces of $\ff_2^2$ along $p$ does \emph{not} span $\ff_2^2$. We call these ``bad'' paths. At $r = 1$, there are two paths to consider: $(u, v)$ and $(v, u)$. Since $\ell = 1$, $Z(u, v) = Z(v, u)$ is a 1-dimensional subspace of $\ff_2^2$, so both of these paths fail to span the entire space and are counted as bad paths. Thus, $X_1 = 2$ with probability 1. To compute $X_2$, observe that, since the graph is 3-regular, both of these bad paths of length 1 can be extended in exactly 2 different ways to be paths of length 2. So suppose the path $(u, v)$ can be extended to either $(u, v, w_1)$ or $(u, v, w_2)$. If $Z(v, w_1) \subseteq Z(u, v)$ (which happens with probability $\frac13 < \frac12$ since there are three nonzero vectors in $\ff_2^2$ and only one lies within the subspace), then the extension $(u, v, w_1)$ will still fail to generate the space, so it will still be counted in $X_2$. Thus, with probability more than $\frac12$, the extension \emph{will} generate the space, and not be counted in $X_2$. In this case we say that the path becomes a ``good'' path. The same holds for $(u, v, w_2)$. Thus, as we extend the bad path $(u, v)$ by 1 vertex, we have created 2 new potentially bad paths, but each one immediately becomes a good path with probability $> \frac12$, so the expected number of bad extensions is less than 1. Similarly, the expected number of bad extensions of $(v, u)$ is less than 1 as well. More generally, this argument shows that the expected value of $X_r$ is strictly decreasing in $r$. Thus, with high probability, $X_r$ will eventually hit zero, at which point we can conclude that $e^*$ has no bad paths, so it is a good edge.
	
This special case contains the main idea of the proof. Equation (\ref{equChooseD}) is the key, ensuring that the expected value of $X_r$ is decreasing: at each path extension, $d - 1 = 2^\ell$ new potentially bad paths are created, and each becomes a good path with probability $> \frac{1}{2^\ell}$. The specific choice of $r$ in (\ref{equChooseR}) is just a bound on how large $r$ may need to be to ensure $X_r = 0$ with high probability.

\begin{proof}[Proof of Lemma~\ref{lemNUGLGMostEdgesGood}]
	Let $e^*$ be an edge of $\widetilde{H}$ chosen uniformly at random. For each $i \in [r]$, let $P_i$ be the set of paths of length $i$ whose first two vertices are the endpoints of $e^*$. For any path $p = (v_0, \seq{v}{n})$ in $\widetilde{H}$, we say that the \emph{dimension} of $p$, denoted $\dim(p)$, is the dimension of the vector space
	\[Z(p) := \txt{span}\left(\bigcup_{i \in [n]} Z(v_{i - 1}, v_i)\right).\]
	We define random variables $\seq{X}{r}$ as follows:
	\begin{equation*}\label{equXIDef}
	X_i := \sum_{p \in P_i} \left(d^{m - \dim(p)} - 1\right).
	\end{equation*}
	(Note that this is more general than the definition in the proof sketch, as it counts subspaces of different dimensions with different weights.)
	
	With probability 1,
	\begin{equation}\label{equX1Equals}
	X_1 = 2(d^{m - \ell} - 1),
	\end{equation}
	since there are only 2 paths of length 1 to consider (depending on which direction $e^*$ is traversed), and each has dimension $\ell$. By definition, $X_i \geq 0$ for all $i \in [r]$, and $e^*$ is a good edge if and only if $X_r = 0$. Thus, our goal is to upper-bound $\Pr[X_r \geq 1]$.
	
	\begin{clm}\label{claExpectationDecreasing}
		For any $2 \leq i \leq r$ and $x \geq 1$,
		\begin{equation*}\label{equExpectationStrictlyDecreasing}
		\ee[X_i \suchthat X_{i - 1} = x] \leq x - \frac{1}{2^{m\ell}}.
		\end{equation*}
	\end{clm}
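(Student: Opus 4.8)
The plan is to condition on the set $P_{i-1}$ of bad-so-far paths (those with $\dim(p) < m$, equivalently contributing a positive term $d^{m-\dim(p)}-1$ to $X_{i-1}$) and on all the subspaces $Z(v_{j-1},v_j)$ appearing along them, and then analyze the single fresh random choice made when each such path is extended by one edge. Fix a path $p = (v_0, \dots, v_{i-1}) \in P_{i-1}$ with $\dim(p) = \delta_p < m$. Because $\widetilde H$ is $d$-regular and has girth far larger than $r$, the vertex $v_{i-1}$ has exactly $d-1 = 2^\ell$ neighbours other than $v_{i-2}$, and each extension $p' = (v_0, \dots, v_{i-1}, w)$ stays a simple path; moreover the subspace $Z(v_{i-1}, w)$ on the new edge is independent of everything revealed so far (again by the girth bound, this edge has not been seen). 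So $\dim(p')$ depends only on how the random $\ell$-dimensional subspace $Z(v_{i-1},w)$ sits relative to the fixed space $Z(p)$ of dimension $\delta_p$.

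The first key step is an exact (or good upper) bound on $\ee\bigl[d^{m - \dim(p')} - 1 \bigm| \dim(p) = \delta_p\bigr]$ in terms of $d^{m-\delta_p} - 1$. Here I would compute, for a uniformly random $\ell$-dimensional subspace $W$ of $\fm$ and a fixed subspace $V$ of dimension $\delta_p$, the distribution of $\dim(V + W)$; with $d = 2^\ell + 1$ in mind, the cleanest route is to bound the probability that $W \subseteq V$ (the worst case, where the dimension does not grow) — this probability is at most something like $1/d^{\,m-\delta_p}$ scaled appropriately — and to observe that any increase in dimension by even $1$ cuts the term $d^{m-\dim(p')}-1$ by roughly a factor $d$. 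Summing the contributions of the new path $p'$ and noting there are exactly $d-1$ extensions of $p$, the target is to show
\[
\sum_{p' \text{ extends } p} \ee\bigl[d^{m-\dim(p')}-1\bigr] \;\le\; \bigl(d^{m-\delta_p}-1\bigr) - \frac{1}{2^{m\ell}\,x}\cdot(\text{something}),
\]
i.e. that extending one bad path by its $d-1 = 2^\ell$ children loses, in expectation, at least a $d^{m-\delta_p}-1$ worth of "mass" minus a strictly positive slack. Summing over all $p \in P_{i-1}$ (there are at most $x$ of them, since each contributes at least $1$ to $X_{i-1} = x$) and using linearity of expectation then yields $\ee[X_i \mid X_{i-1} = x] \le x - \tfrac{1}{2^{m\ell}}$. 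Paths $p$ with $\dim(p) = m$ contribute $0$ to $X_{i-1}$ and have all extensions also contributing $0$, so they may be ignored.

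The main obstacle I anticipate is the subspace-counting estimate: pinning down $\Pr[\dim(V+W) = \delta_p + j]$ for a random $\ell$-dimensional $W$ precisely enough that, after summing $d^{m-\delta_p-j}-1$ against these probabilities and multiplying by $d-1 = 2^\ell$, the result is bounded by $d^{m-\delta_p}-1$ with a clean additive gap of at least $2^{-m\ell}$ independent of $\delta_p$. The worst case is $\delta_p$ close to $m$ (small $d^{m-\delta_p}-1$), where the slack is tightest; here I would argue that even then the probability of $W \not\subseteq V$ is bounded below by a constant like $1 - 1/d \ge 1/2^\ell \cdot$(const), and that the resulting dimension jump drops the term enough. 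The arithmetic inequality needed for this bookkeeping is essentially Lemma \ref{lemWierdIntegerBound} (with $n = m - \delta_p$ and the given $d$), which is presumably why that lemma was proved first — so I would reduce the subspace computation to exactly the form $\frac{d^n - d^{n-1}}{(d-1)^n} + d^{n-1} - 1 \le \frac{d^n-1}{d-1}$ and invoke it directly. The choices $d = 2^\ell+1$ (so that $(d-1) = 2^\ell$ children are exactly balanced against the $>1/2^\ell$ resolution probability) and the crude but uniform slack $1/2^{m\ell}$ (a lower bound on the probability that a random $\ell$-subspace is in "general position" enough to increase the dimension) are the two quantitative hooks that make the whole estimate go through.
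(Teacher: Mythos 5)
Your plan follows essentially the same route as the paper's proof: condition on the realized subspaces along the paths in $P_{i-1}$, use the girth bound to get independence of the fresh subspaces on the $d-1=2^\ell$ extensions of each bad path, bound the containment probability $\Pr[Z(p_1)\subseteq Z(p_0)]$ via the first-$\ell$-random-vectors argument, and reduce the bookkeeping to Lemma~\ref{lemWierdIntegerBound} with $n=m-\dim(p_0)$, extracting a uniform additive slack per bad parent path. Two small precisions to carry out the plan as the paper does: the containment probability is bounded by $\bigl(\tfrac{2^{\dim(p_0)}-1}{2^m-1}\bigr)^{\ell}$, which should be compared to $(d-1)^{-n}$ rather than $d^{-n}$, and the slack must be the uniform $2^{-m\ell}$ per bad parent (as your closing paragraph indicates) rather than the $\tfrac{1}{2^{m\ell}x}$ appearing in your display, since the claim's bound $x-2^{-m\ell}$ is then obtained simply because at least one bad parent exists when $x\geq 1$.
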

	
	\begin{proof}[Proof of Claim.]
		We prove that this holds even after additionally conditioning on any tuple of realizations $R$ of the random subspaces along all paths in $P_{i - 1}$ that is consistent with $X_{i - 1} = x$. Note that conditioning on these realizations also fixes the value of $\dim(p_0)$ for any $p_0 \in P_{i - 1}$.
		
		We can split a path $p = (v_0, \seq{v}{i})$ into two subpaths $p_0 := (v_0, \seq{v}{i - 1})$ and $p_1 := (v_{i - 1}, v_i)$. When this happens, we say $p_1 \succ p_0$ and $p_0 + p_1 = p$. Using this decomposition, we obtain the following bound on $X_i$:
		\begin{align*}
		X_i &= \sum_{p_0 \in P_{i - 1}} \sum_{p_1 \succ p_0} (d^{m - \dim(p_0 + p_1)} - 1) \leq \sum_{p_0 \in P_{i - 1}} \sum_{p_1 \succ p_0} \twocases{\txt{ if } Z(p_1) \subseteq Z(p_0)}{d^{m - \dim(p_0)} - 1}{\txt{ otherwise}}{d^{m - \dim(p_0) - 1} - 1}
		\end{align*}
		Note that, since the girth of $\widetilde{H}$ is greater than $r$, as we expand the radius of edges we are examining, there are never any cycles. Thus, the events that each $Z(p_1) \subseteq Z(p_0)$ holds are independent, and also independent of the previously revealed subspaces $R$. For convenience, let us denote
		\[f(p_0, p_1) := \twocases{\txt{ if } Z(p_1) \subseteq Z(p_0)}{d^{m - \dim(p_0)} - 1}{\txt{ otherwise}}{d^{m - \dim(p_0) - 1} - 1},\]
		so that
		\[X_i \leq \sum_{p_0 \in P_{i - 1}} \sum_{p_1 \succ p_0} f(p_0, p_1).\]
		Then
		{\allowdisplaybreaks
		\begin{align*}
		\ee[X_i \suchthat R] &\leq \ee\left[\sum_{p_0 \in P_{i - 1}} \sum_{p_1 \succ p_0} f(p_0, p_1) \ \bigg{|}\ R \right]\\
		&= \sum_{p_0 \in P_{i - 1}} \sum_{p_1 \succ p_0} \ee\left[f(p_0, p_1) \suchthat R\right]\\
		&= \sum_{\substack{p_0 \in P_{i - 1} \\ n := m - \dim(p_0) \geq 1}} \sum_{p_1 \succ p_0} \ee\left[f(p_0, p_1) \suchthat R\right] \stextn{since $f(p_0, p_1) = 0$ for all $p_0$ of dimension $m$}\\
		&= \sum_{\substack{p_0 \in P_{i - 1} \\ n \geq 1}} \sum_{p_1 \succ p_0} \left(\begin{array}{ll} \Pr[Z(p_1) \subseteq Z(p_0) \suchthat R] (d^n - 1) + \txt{}\\ (1 - \Pr[Z(p_1) \subseteq Z(p_0) \suchthat R]) (d^{n - 1} - 1) \end{array} \right)\\
		&= \sum_{\substack{p_0 \in P_{i - 1} \\ n \geq 1}} \sum_{p_1 \succ p_0} \left(\Pr[Z(p_1) \subseteq Z(p_0) \suchthat R] (d^n - d^{n - 1}) + d^{n - 1} - 1\right)\\
		&\leq \sum_{\substack{p_0 \in P_{i - 1} \\ n \geq 1}} \sum_{p_1 \succ p_0} \left(\left(\frac{2^{\dim(p_0)} - 1}{2^m - 1}\right)^\ell\left(d^n - d^{n - 1}\right) + d^{n - 1} - 1\right).
		\end{align*}}
		This inequality follows from the observation that, if we choose a vector from $\fm \setminus \{0\}$ uniformly at random, it lies within $Z(p_0)$ with probability $\frac{2^{\dim(p_0)} - 1}{2^m - 1}$. And we may think of choosing a random $\ell$-dimensional subspace as repeatedly sampling nonzero vectors independently and uniformly at random until the dimension of all vectors sampled so far is $\ell$. In order to have $Z(p_1) \subseteq Z(p_0)$, it is necessary for all of these vectors to lie within $Z(p_0)$, so in particular, the first $\ell$ vectors must be contained in $Z(p_0)$. Thus, the probability that $Z(p_1) \subseteq Z(p_0)$ holds is at most the probability that just the first $\ell$ vectors are contained in $Z(p_0)$, which is $\left((2^{\dim(p_0)} - 1)/(2^m - 1)\right)^\ell$.
		
		Note that, when $n := m - \dim(p_0) \geq 1$,
		\begin{align*}
		\left(\left(2^{\dim(p_0) - m}\right)^\ell - \left(\frac{2^{\dim(p_0)} - 1}{2^m - 1}\right)^\ell\right)&\left(d^n - d^{n - 1}\right) \geq \left(2^{\dim(p_0) - m}\right)^\ell - \left(\frac{2^{\dim(p_0)} - 1}{2^m - 1}\right)^\ell\\
		&= \frac{(2^{\dim(p_0)}(2^m - 1))^\ell - ((2^{\dim(p_0)} - 1)2^m)^\ell}{(2^m(2^m - 1))^\ell}\\
		&\geq \left(\frac{2^{m\ell} - 2^{\dim(p_0)\ell}}{(2^m(2^m - 1))^\ell}\right)\\
		&= \left(\frac{1 - 2^{\ell(\dim(p_0) - m)}}{(2^m - 1)^\ell}\right)\\
		&= \left(1 - 2^{-\ell n}\right)\left(2^m - 1\right)^{-\ell}\\
		&\geq \left(1 - 2^{-\ell}\right)2^{-m\ell} \snc{n \geq 1}\\
		&\geq 2^{-\ell}2^{-m\ell} \snc{\ell \geq 1}\\
		&= \frac{1}{2^{(m + 1)\ell}}.
		\end{align*}
		Therefore,
		\begin{align*}
		\left(\frac{2^{\dim(p_0)} - 1}{2^m - 1}\right)^\ell\left(d^n - d^{n - 1}\right) + d^{n - 1} - 1 &\leq \left(2^{\dim(p_0) - m}\right)^\ell\left(d^n - d^{n - 1}\right) \push + d^{n - 1} - 1 - \frac{1}{2^{(m + 1)\ell}}\\
		&= \left(\frac{1}{(2^{\ell})^n}\right)\left(d^n - d^{n - 1}\right) + d^{n - 1} - 1 - \frac{1}{2^{(m + 1)\ell}}\\
		&= \left(\frac{d^n - d^{n - 1}}{(d - 1)^n} + d^{n - 1} - 1\right) - \frac{1}{2^{(m + 1)\ell}} \stextn{from the definition of $d$ in (\ref{equChooseD})}\\
		&\leq \frac{d^n - 1}{d - 1} - \frac{1}{2^{(m + 1)\ell}} \stext{from Lemma~\ref{lemWierdIntegerBound}}.
		\end{align*}
		
		Putting this all together, we have
		\begin{align*}
		\ee[X_i \suchthat R] &\leq \sum_{\substack{p_0 \in P_{i - 1} \\ n := m - \dim(p_0) \geq 1}} \sum_{p_1 \succ p_0} \left(\left(\frac{2^{\dim(p_0)} - 1}{2^m - 1}\right)^\ell\left(d^n - d^{n - 1}\right) + d^{n - 1} - 1\right)\\
		&\leq \sum_{\substack{p_0 \in P_{i - 1} \\ n := m - \dim(p_0) \geq 1}} \sum_{p_1 \succ p_0} \left(\frac{d^n - 1}{d - 1} - \frac{1}{2^{(m + 1)\ell}}\right)\\
		&= \sum_{\substack{p_0 \in P_{i - 1} \\ n := m - \dim(p_0) \geq 1}} \left(d^n - 1 - \frac{d - 1}{2^{(m + 1)\ell}}\right)\\
		&= \sum_{\substack{p_0 \in P_{i - 1} \\ n := m - \dim(p_0) \geq 1}} \left(d^n - 1 - \frac{1}{2^{m\ell}}\right) \stext{from (\ref{equChooseD})}\\
		&\leq \left(\sum_{\substack{p_0 \in P_{i - 1} \\ n := m - \dim(p_0) \geq 1}} d^n - 1\right) - \frac{1}{2^{m\ell}}\\
		&= \left(\sum_{p_0 \in P_{i - 1}} \left(d^{m - \dim(p_0)} - 1\right)\right) - \frac{1}{2^{m\ell}} \stextn{since $n = 0$ terms contribute zero to the sum}\\
		&= x - \frac{1}{2^{m\ell}},
		\end{align*}
		where in the final inequality we have used the assumption that $R$ is consistent with $X_{i - 1} = x \geq 1$, so there is at least one term in the sum, i.e., at least one bad path. Thus, Claim~\ref{claExpectationDecreasing} holds.
	\end{proof}
	
	Returning to the proof of Lemma~\ref{lemNUGLGMostEdgesGood}, we now suppose toward a contradiction that $\Pr[X_r \geq 1] > \gamma \varepsilon$. Since $X_i = 0$ implies that $X_j = 0$ for all $i \leq j$, it follows that, for all $i \in \{1, 2, 3, \dots, r - 1\}$,
	\begin{equation}\label{equAllXProbablyNonzero}
		\Pr[X_i \geq 1] > \gamma \varepsilon.
	\end{equation}
	
	For $2 \leq i \leq r$, observe that
	\begin{align*}
	\ee[X_i] &= \sum_{x \geq 0} \Pr[X_{i - 1} = x] \ee[X_i \suchthat X_{i - 1} = x]\\
	&= \Pr[X_{i - 1} = 0] \ee[X_i \suchthat X_{i - 1} = 0] + \sum_{x \geq 1} \Pr[X_{i - 1} = x] \ee[X_i \suchthat X_{i - 1} = x]\\
	&\leq 0 + \sum_{x \geq 1} \Pr[X_{i - 1} = x] \left(x - \frac{1}{2^{m\ell}}\right) \stext{by Claim~\ref{claExpectationDecreasing}}\\
	&= \ee[X_{i - 1}] - \frac{\Pr[X_{i - 1} \geq 1]}{2^{m\ell}}.
	\end{align*}
	Iteratively applying this inequality for $i = r, r - 1, \dots, 3, 2$, we conclude that
	\begin{align*}
	\ee[X_r] &\leq \ee[X_1] - \sum_{i = 1}^{r - 1} \frac{\Pr[X_{i} \geq 1]}{2^{m\ell}}\\
	&< 2(d^{m - \ell} - 1) - \sum_{i = 1}^{r - 1} \frac{\gamma \varepsilon}{2^{m\ell}} \stext{from (\ref{equX1Equals}) and (\ref{equAllXProbablyNonzero})}\\
	&= 2(d^{m - \ell} - 1) - \frac{(r - 1) \gamma \varepsilon}{2^{m\ell}}\\
	&\leq 0 \stext{from the definition of $r$ in (\ref{equChooseR})},
	\end{align*}
	which is a contradiction, since $X_r$ can never be negative.
	
	Hence,
	\[\Pr[e^* \txt{ is a bad edge}] = \Pr[X_r \geq 1] \leq \gamma \varepsilon,\]
	so the expected fraction of bad edges is at most $\gamma\varepsilon$. Markov's inequality states that, for a nonnegative real-valued random variable $Y$ and any $a > 0$, the probability that $Y \geq a$ is at most $\frac{\ee[Y]}{a}$. Applying Markov's inequality with $Y$ as the fraction of bad edges and $a := \gamma$, we conclude that the probability the fraction of bad edges is greater than $\gamma$ is thus at most $\frac{\gamma\varepsilon}{\gamma} = \varepsilon$.
\end{proof}

\begin{lem}\label{lemSoundnessU2Tilde}
	With probability at least $1 - \varepsilon$, the satisfiability of $\widetilde{U}_2$ is less than
	\[\left(1 - \gamma\right) \left(\soundness + \delta\right).\]
\end{lem}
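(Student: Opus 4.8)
Since the $2^\ell$ constraints in any one bundle of $\widetilde U_2$ are pairwise incompatible, at most one per bundle can be satisfied, so the satisfiability of $\widetilde U_2$ is exactly $\frac{1}{2^\ell}$ times the maximum, over all assignments, of the fraction of \emph{bundles} that are satisfied (i.e.\ have at least one constituent constraint satisfied). As $\widetilde H$ is $d$-regular on $n := |V(\widetilde H)|$ vertices it has $|E(\widetilde H)| = \frac{dn}{2}$ edges, and since $(1-\gamma)(\soundness + \delta) = \soundness + \frac{\delta}{2^\ell} = \frac{1}{2^\ell}\big(\frac{2}{d} + \delta\big)$ by the definition of $\gamma$, the lemma is equivalent to: with probability at least $1-\varepsilon$, no assignment satisfies $s_0 := \lceil (\frac{2}{d} + \delta)\,|E(\widetilde H)|\,\rceil$ of the bundles. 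I would prove this last statement by a union bound over subgraphs of $\widetilde H$.

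\textbf{The independence fact.} For a fixed assignment $x$ and a fixed edge $e = \{v_1,v_2\}$, the bundle on $e$ is satisfied by $x$ iff $x_{v_1} - x_{v_2} - b(v_1,v_2) \in Z(v_1,v_2)$. Since $b(v_1,v_2)$ is uniform on $\fm$, this occurs with probability $\frac{2^\ell}{2^m} = 2^{\ell-m}$, whatever the value of $Z(v_1,v_2)$; and these events are mutually independent over $e \in E(\widetilde H)$ because the pairs $(b,Z)$ are drawn independently on distinct edges. (Note the girth of $\widetilde H$ plays no role in this lemma.)

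\textbf{The per-subgraph bound.} Fix a subgraph $S \subseteq E(\widetilde H)$ with $u$ vertices and $c$ components, and bound the probability that \emph{some} assignment satisfies every bundle in $S$. Because each constraint only involves a difference $x_{v_1} - x_{v_2}$, the set of assignments satisfying all bundles in $S$ is invariant under shifting the variables of a component by a common vector, so we may restrict to assignments vanishing on a fixed root vertex of each component. Choosing a spanning forest $F$ of $S$ and revealing $b$ on its $u - c$ edges, each forest edge confines one new vertex to a coset of size $2^\ell$, so at most $2^{\ell(u-c)}$ candidate assignments satisfy $F$. For each candidate, the $|S| - (u-c)$ non-forest edges are — by the independence fact, and independently of $b|_F$ — all satisfied with probability $(2^{\ell-m})^{|S| - u + c}$, so a union bound over candidates yields
\[
\Pr\big[\exists x\ \text{satisfying every bundle in}\ S\big] \;\le\; 2^{\ell(u-c)}\cdot 2^{(\ell-m)(|S| - u + c)} \;=\; 2^{\,\ell|S| - m\,\mu_S},
\]
where $\mu_S := |S| - u + c$ is the cyclomatic number of $S$, so $\mu_S \ge |S| - n + 1$. (One could also skip the shift trick and union-bound over all $2^{mu}$ assignments on the vertices of $S$; this is slightly weaker but still enough.)

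\textbf{Union bound and conclusion; main obstacle.} An assignment satisfying $\ge s_0$ bundles witnesses the event above for some $S$ with $|S| = s_0$, and there are at most $2^{|E(\widetilde H)|}$ such subgraphs, so the failure probability is at most
\[
2^{|E(\widetilde H)|}\cdot 2^{\ell s_0 - m(s_0 - n + 1)} \;=\; 2^{\,|E(\widetilde H)| + (\ell - m)s_0 + m(n-1)} \;\le\; 2^{\,n\left(\frac{d}{2} + \ell + \frac{(\ell-m)\delta d}{2}\right)},
\]
using $m > \ell$ (which follows from (\ref{equChooseM})), $|E(\widetilde H)| = \frac{dn}{2}$, and $s_0 \ge (\frac{2}{d}+\delta)|E(\widetilde H)|$. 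A short computation shows that the choice of $m$ in (\ref{equChooseM}) is precisely what makes $\frac{d}{2} + \ell + \frac{(\ell-m)\delta d}{2} \le \log_2\varepsilon$, so the bound is at most $\varepsilon^n \le \varepsilon$, as required. I expect the main obstacle to be the per-subgraph bound: the gain over the naive count comes from recognizing that a spanning forest of $S$ is always completable (up to the coset choices), so only the $\mu_S$ ``surplus'' edges contribute genuinely improbable constraints; the remaining effort is bookkeeping to make the exponent in the union bound match up exactly with the opaque formula defining $m$.
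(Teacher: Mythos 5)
Your proposal is correct and follows essentially the same route as the paper's proof: reduce the satisfiability of $\widetilde{U}_2$ to the fraction of satisfied bundles, observe that for a fixed assignment each bundle is satisfied independently with probability $2^{\ell-m}$ (because of the uniform shift $b$), and union-bound over subgraphs containing a $\left(\frac{2}{d}+\delta\right)$-fraction of the edges and over assignments, with the choice of $m$ in (\ref{equChooseM}) making the exponent at most $n\log_2\varepsilon$. Your spanning-forest/cyclomatic-number sharpening of the per-subgraph count is a minor refinement the paper does not need (it simply union-bounds over all $2^{mn}$ assignments), as you yourself note.
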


\begin{proof}
	Call a subgraph of $\widetilde{H}$ \emph{large} if it contains all of the vertices of $\widetilde{H}$ and at least a $(\frac{2}{d} + \delta)$ fraction of the edges of $\widetilde{H}$. Recall the definition of $d$ and $\gamma$ in respective Equations (\ref{equChooseD}) and (\ref{equChooseGamma}). If an assignment satisfies at least a
	\[\left(1 - \gamma\right) \left(\soundness + \delta\right) = \left(\soundness + \frac{\delta}{2^\ell}\right)\]
	fraction of the constraints in $\widetilde{U}_2$, then, since at most one constraint from each bundle can be satisfied, and there are $2^\ell$ constraints in each bundle, the assignment must satisfy the bundles of at least a
	\[2^\ell \left(\soundness + \frac{\delta}{2^\ell}\right) = \frac{1}{2^{\ell - 1} + \frac{1}{2}} + \delta = \frac{2}{d} + \delta\]
	fraction of edges. (Recall that to ``satisfy a bundle" is to satisfy one constraint in the bundle.) In other words, there must be some large subgraph on which the assignment completely satisfies all of the bundles. Therefore, by the union bound, which says that the probability of the union of multiple events is at most the sum of the individual probabilities of each of the events, we conclude that the probability that some assignment satisfies at least a $\left(1 - \gamma\right) \left(\soundness + \delta\right)$ fraction of constraints is at most the number of large subgraphs times the probability that some assignment satisfies all bundles on a given large subgraph. Supposing that $\widetilde{H}$ (which is $d$-regular) has $n$ vertices, the number of large subgraphs of $\widetilde{H}$ is bounded by $2^{\frac{nd}{2}}$, the total number of subgraphs containing all vertices of $\widetilde{H}$. To bound the probability that some assignment satisfies all bundles on a given large subgraph, we first note that the probability of an arbitrary assignment satisfying all bundles is at most
	\[\left(\frac{2^\ell}{2^m}\right)^{\frac{nd}{2}\left(\frac{2}{d} + \delta\right)} = 2^\wedge\hspace{-.3em}\left[(\ell - m)\frac{nd}{2}\left(\frac{2}{d} + \delta\right)\right],\]
	since there are at least $\frac{nd}{2}\left(\frac{2}{d} + \delta\right)$ edges in any large subgraph, and each edge has its bundle satisfied independently with probability $\frac{2^\ell}{2^m}$. Thus, applying the union bound again, the probability that some assignment satisfies all bundles on a given large subgraph is at most $(2^m)^n$ times this quantity. So, in total, the probability that some assignment satisfies at least a $\left(1 - \gamma\right) \left(\soundness + \delta\right)$ fraction of constraints is at most
	\begin{align*}
	&\ 2^\wedge\hspace{-.3em}\left[\frac{nd}{2} + mn + (\ell - m)\frac{nd}{2}\left(\frac{2}{d} + \delta\right)\right]\\
	=&\ 2^\wedge\hspace{-.3em}\left[n\left(\frac{d}{2} + \frac{\delta d}{2}\left(\frac{2}{\delta d} + 1\right)\ell - \frac{\delta d}{2} \cdot m\right)\right]\\
	\leq&\ 2^\wedge\hspace{-.3em}\left[n \hspace{.16em} \left(\frac{d}{2} + \frac{\delta d}{2}\left(\frac{2}{\delta d} + 1\right)\ell - \frac{\delta d}{2} \cdot \left(\frac{1}{\delta} + \left(\frac{2}{\delta d} + 1\right)\ell - \frac{2}{\delta d} \log_2(\varepsilon)\right)\right)\right] \stext{from (\ref{equChooseM})}\\
	=&\ 2^{n \log_2(\varepsilon)}\\
	\leq&\ 2^{\log_2(\varepsilon)} \bcause{\varepsilon < 1 \implies \log_2(\varepsilon) < 0, \txt{ and } n > 0}\\
	=&\ \varepsilon.
	\end{align*}
	Therefore, the probability that no assignment satisfies at least this fraction of constraints is at least $1 - \varepsilon$.
\end{proof}

The final step is to observe that $\widetilde{U}_2$ being highly unsatisfiable implies $U_2$ is also highly unsatisfiable.

\begin{lem}\label{lemNUGLGSoundnessU2}
	With probability at least $1 - 2\varepsilon$, the satisfiability of $U_2$ (and thus of $\mathcal{G}(U_2)$) is less than
	\[s := \soundness + \delta.\]
\end{lem}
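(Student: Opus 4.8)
The plan is to derive this from the two preceding lemmas by a union bound, together with a short counting argument relating $\opt(U_2)$ to $\opt(\widetilde{U}_2)$. Write $E := |E(\widetilde{H})|$, so that $\widetilde{U}_2$ has exactly $2^\ell E$ constraints and $U_2$ has exactly $2^\ell |E(H)|$ constraints, where $E(H)$ is the set of good edges. By Lemma~\ref{lemNUGLGMostEdgesGood}, with probability at least $1 - \varepsilon$ at most a $\gamma$ fraction of the edges of $\widetilde{H}$ are bad, so $|E(H)| \geq (1 - \gamma)E$; by Lemma~\ref{lemSoundnessU2Tilde}, with probability at least $1 - \varepsilon$ we have $\opt(\widetilde{U}_2) < (1 - \gamma)(\soundness + \delta)$. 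A union bound shows that both events hold simultaneously with probability at least $1 - 2\varepsilon$, and for the rest of the argument I would condition on being in this event (note that here $|E(H)| > 0$ since $0 < \gamma < 1$, so $\opt(U_2)$ is well-defined).

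Next I would observe that, for any fixed assignment of the variables, every constraint of $U_2$ is also a constraint of $\widetilde{U}_2$, so the number of constraints of $U_2$ satisfied by that assignment is at most the number of constraints of $\widetilde{U}_2$ satisfied by it, and the latter is at most $2^\ell E \cdot \opt(\widetilde{U}_2)$. Dividing by the number of constraints of $U_2$ gives
\[
\opt(U_2) \;\leq\; \frac{2^\ell E \cdot \opt(\widetilde{U}_2)}{2^\ell |E(H)|} \;=\; \frac{E}{|E(H)|}\,\opt(\widetilde{U}_2) \;\leq\; \frac{\opt(\widetilde{U}_2)}{1 - \gamma} \;<\; \soundness + \delta,
\]
using $|E(H)| \geq (1 - \gamma)E$ and then the bound on $\opt(\widetilde{U}_2)$. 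The statement about $\mathcal{G}(U_2)$ then follows from Lemma~\ref{lemGSameSatisfiability}.

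This lemma is essentially bookkeeping, since all the genuinely hard probabilistic work is already carried out in Lemmas~\ref{lemNUGLGMostEdgesGood} and~\ref{lemSoundnessU2Tilde}; the only point needing a moment's care is the normalization, namely that passing from $\widetilde{U}_2$ to $U_2$ deletes bad edges from both the count of satisfied constraints and the total count of constraints, and one must check that this can inflate the satisfiability ratio by at most a factor of $\frac{1}{1 - \gamma}$. This is exactly what the choice of $\gamma$ in~(\ref{equChooseGamma}) is calibrated to absorb, since $(1-\gamma)(\soundness+\delta) = \soundness + \frac{\delta}{2^\ell}$ leaves the required slack. I do not anticipate any real obstacle beyond keeping these inequalities pointing the right way.
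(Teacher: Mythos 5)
Your proposal is correct and follows essentially the same route as the paper's proof: a union bound over Lemmas~\ref{lemNUGLGMostEdgesGood} and~\ref{lemSoundnessU2Tilde}, followed by the observation that every constraint of $U_2$ is a constraint of $\widetilde{U}_2$ while $U_2$ retains at least a $(1-\gamma)$ fraction of the constraints, so the satisfiability ratio inflates by at most $\frac{1}{1-\gamma}$. The paper phrases this last step as a contrapositive rather than as your direct ratio bound, but the counting argument is identical.
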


\begin{proof}
	By Lemma~\ref{lemNUGLGMostEdgesGood}, the probability that less than a $(1 - \gamma)$ fraction of edges of $\widetilde{H}$ are good edges is at most $\varepsilon$. By Lemma~\ref{lemSoundnessU2Tilde}, the probability that $\widetilde{U}_2$ is $(1 - \gamma) s$-satisfiable is at most $\varepsilon$ as well. By the union bound, the probability that either of these two events occurs is at most $2\varepsilon$, so the probability that neither event occurs is at least $1 - 2\varepsilon$. So it suffices to prove that, whenever at least a $(1 - \gamma)$ fraction of the edges of $\widetilde{H}$ are good edges, if $\widetilde{U}_2$ is not $(1 - \gamma) s$-satisfiable, then $U_2$ is not $s$-satisfiable.
	
	We prove the contrapositive, i.e., that if at least an $s$ fraction of constraints are satisfiable in $U_2$, then at least a $(1 - \gamma)s$ fraction of constraints are satisfiable in $\widetilde{U}_2$. Suppose that $\widetilde{U}_2$ has a total of $c$ constraints. Then $U_2$ has at least $(1 - \gamma)c$ constraints. So if at least an $s$ fraction of constraints are satisfiable in $U_2$, it means that at least $s(1 - \gamma)c$ constraints of $U_2$ are satisfied by some assignment $x_v$. Since $U_2$ and $\widetilde{U}_2$ have the same variable set, and all of the constraints of $U_2$ are also constraints of $\widetilde{U}_2$, it follows that $x_v$ must satisfy $s(1 - \gamma)c$ constraints of $\widetilde{U}_2$ as well, that is, at least an $s(1 - \gamma)$ fraction of constraints.
\end{proof}

\subsection{Proof of $C^k$-equivalence}\label{subCkEquivalenceProof}

The following lemma is a generalization of Duplicator's strategy for determining $g^*(v_4)$ in the example from Section~\ref{subKEquals3}. Recall that $H$ is the underlying graph of both instances, consisting of only good edges.

\begin{lem}\label{lemNUGLGRadiusProperty}
	Let $p = (v_0, \seq{v}{n})$ be a path in $H$ of length $n \geq r$. Given any values in $\fm$ for $g^*(v_0)$ and $g^*(v_n)$, it is possible to extend $g^*$ to all of the intermediate vertices of $p$ so that the map $f(x_v^g) := x_v^{g + g^*(v)}$ is a partial isomorphism between $\mathcal{G}(U_1)$ and $\mathcal{G}(U_2)$ over the substructure with universe $\{x_v^g \suchthat v \in p,\ g \in \fm\}$ and relations involving consecutive vertices in $p$.
\end{lem}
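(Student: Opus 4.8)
<br>

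The plan is to reduce the statement to a purely linear-algebraic fact about "propagating" a partial isomorphism along an edge bundle, and then to chain these propagations along the path, using the fact that the path has length at least $r$ and therefore (by goodness of its edges) the union of the subspaces $Z(v_{i-1},v_i)$ spans $\fm$. First I would unpack what it means for $f(x_v^g) := x_v^{g+g^*(v)}$ to be a partial isomorphism over the constraints of a single bundle on an edge $\{v_{i-1},v_i\} \in E(H)$. Exactly as in the calculation in Section~\ref{subKEquals3}, expanding the definitions of $f$ and $\mathcal{G}$ and using that the $\widetilde{U}_2$-constraints on this edge are the $\widetilde{U}_1$-constraints shifted by $b(v_{i-1},v_i)$, one finds that $f$ is a partial isomorphism over this bundle if and only if
\[
Z(v_{i-1},v_i) = Z(v_{i-1},v_i) + \bigl(b(v_{i-1},v_i) + g^*(v_i) - g^*(v_{i-1})\bigr),
\]
i.e.\ if and only if $g^*(v_i) - g^*(v_{i-1}) \equiv b(v_{i-1},v_i) \pmod{Z(v_{i-1},v_i)}$ (here I use that a coset of $Z$ equals $Z$ itself iff the shift vector lies in $Z$). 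So the whole question becomes: given prescribed $g^*(v_0)$ and $g^*(v_n)$, can we choose $g^*(v_1),\dots,g^*(v_{n-1}) \in \fm$ so that $g^*(v_i) - g^*(v_{i-1}) \in b(v_{i-1},v_i) + Z(v_{i-1},v_i)$ for every $i \in [n]$?

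Next I would phrase this as a single linear feasibility condition. Telescoping the $n$ membership constraints, any choice of intermediate $g^*$ values forces
\[
g^*(v_n) - g^*(v_0) \in \sum_{i=1}^{n}\bigl(b(v_{i-1},v_i) + Z(v_{i-1},v_i)\bigr) = \Bigl(\sum_{i=1}^{n} b(v_{i-1},v_i)\Bigr) + \mathrm{span}\Bigl(\bigcup_{i=1}^n Z(v_{i-1},v_i)\Bigr),
\]
and conversely, I claim this necessary condition is also sufficient: if the target difference $g^*(v_n)-g^*(v_0)$ lies in that affine set, one can pick the $g^*(v_i)$ greedily along the path, at each step choosing $g^*(v_i) := g^*(v_{i-1}) + b(v_{i-1},v_i) + z_i$ for a suitable $z_i \in Z(v_{i-1},v_i)$, with the last step automatically working out because the chosen decomposition of $g^*(v_n) - g^*(v_0)$ assigns the leftover to the final edge. (I would be a little careful that all edges of $p$ are genuinely edges of $H$, hence good, and that $n \geq r$ so goodness of the first edge guarantees the spanning property along the first $r$ steps of $p$ — that is exactly the definition of "good edge.") Since the first edge $\{v_0,v_1\}$ is good and $n \geq r$, the subspace $\mathrm{span}\bigl(\bigcup_{i=1}^{r} Z(v_{i-1},v_i)\bigr)$, and a fortiori $\mathrm{span}\bigl(\bigcup_{i=1}^{n} Z(v_{i-1},v_i)\bigr)$, is all of $\fm$. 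Hence the affine set on the right-hand side is all of $\fm$, so the condition holds for any prescribed $g^*(v_0), g^*(v_n)$, and the extension exists.

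Finally I would assemble these pieces: fix the greedy choice of $g^*(v_1),\dots,g^*(v_{n-1})$ produced above, so that every consecutive difference lies in the correct coset; then by the single-edge computation, $f$ restricted to the variables $\{x_v^g : v \in p,\ g \in \fm\}$ preserves every bundle along $p$, and since the only constraints of $\mathcal{G}(U_1)$ and $\mathcal{G}(U_2)$ among these variables are the lifted bundles on the edges of $p$ (there are no other relation symbols to check, and $f$ is a bijection on each $\{x_v^g\}$-fibre by construction), $f$ is a partial isomorphism over this set. The main obstacle I anticipate is the bookkeeping in the single-edge reduction — getting the signs and the direction of the shift right and correctly identifying "partial isomorphism over a bundle" with the coset condition $g^*(v_i)-g^*(v_{i-1}) \in b(v_{i-1},v_i)+Z(v_{i-1},v_i)$ — together with making sure the greedy construction's "leftover on the last edge" really does land in $Z(v_{n-1},v_n)$, which is where the spanning hypothesis is used; everything after that is a short telescoping argument.
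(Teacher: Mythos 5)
Your proposal is correct and follows essentially the same route as the paper's proof: reduce to the per-edge coset condition $g^*(v_i)-g^*(v_{i-1}) \in b(v_{i-1},v_i) + Z(v_{i-1},v_i)$ (which the paper verifies via the chain of equation-by-equation equivalences), use goodness of the first edge plus $n \geq r$ to see that the subspaces along $p$ span $\fm$, and then define the intermediate $g^*$ values by propagating a chosen decomposition of $g^*(v_0)-g^*(v_n)-\sum_i b(v_{i-1},v_i)$ along the path. The only difference is cosmetic: the paper extracts an explicit basis $B \subseteq \bigcup_i Z(v_{i-1},v_i)$ and works with coefficients $c_{i,j}$, whereas you phrase the same step as a Minkowski sum of cosets with summands $z_i \in Z(v_{i-1},v_i)$.
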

\begin{proof}
	Since $H$ contains only good edges and $p$ has length at least $r$, there exists a set of vectors
	\[B \subseteq \bigcup_{i \in [n]} Z(v_{i - 1}, v_i)\]
	forming a basis of $\fm$. Write $h(i)$ for the number of basis vectors in $B$ taken from $Z(v_{i - 1}, v_i)$, and denote these vectors by
	\[B = \bigcup_{i \in [n]} \{z_{i,j} \suchthat j \in [h(i)]\},\]
	where each $z_{i,j}$ is in $Z(v_{i - 1}, v_i)$. Since $B$ is a basis, there exist coefficients $c_{i, j}$ such that
	\begin{equation}\label{equGStarBasis}
	g^*(v_0) - g^*(v_n) - \sum_{i \in [n]} b(v_{i - 1}, v_{i}) = \sum_{i \in [n]} \sum_{j \in [h(i)]} c_{i, j}z_{i, j}.
	\end{equation}
	For each $i$ in order from $1$ to $n$, inductively define
	\[g^*(v_i) := g^*(v_{i - 1}) - \sum_{j \in [h(i)]} c_{i, j}z_{i, j} - b(v_{i - 1}, v_i).\]
	Note that, by expanding the inductive definition for $g^*(v_n)$, we have
	\begin{align*}
	g^*(v_n) &= g^*(v_{n - 1}) - \sum_{j \in [h(n)]} c_{n, j}z_{n, j} - b(v_{n - 1}, v_n)\\
	&= g^*(v_{n - 2}) - \sum_{j \in [h(n - 1)]} c_{n - 1, j}z_{n - 1, j} - b(v_{n - 2}, v_{n - 1}) \push\hspace{.64cm} - \sum_{j \in [h(n)]} c_{n, j}z_{n, j} - b(v_{n - 1}, v_n)\\
	&= g^*(v_{n - 3}) - \sum_{j \in [h(n - 2)]} c_{n - 2, j}z_{n - 2, j} - b(v_{n - 3}, v_{n - 2}) \push\hspace{.64cm} - \sum_{j \in [h(n - 1)]} c_{n - 1, j}z_{n - 1, j} - b(v_{n - 2}, v_{n - 1}) \push\hspace{.64cm} - \sum_{j \in [h(n)]} c_{n, j}z_{n, j} - b(v_{n - 1}, v_n)\\
	&= \dots\\
	&= g^*(v_0) - \sum_{i \in [n]} \left(b(v_{i - 1}, v_{i}) + \sum_{j \in [h(i)]} c_{i, j}z_{i, j}\right),
	\end{align*}
	so our inductive definition agrees with the given value of $g^*(v_n)$ by (\ref{equGStarBasis}). Finally, observe that, for any $i \in [n]$ and any arbitrary $g_{i - 1}, g_i, z \in \fm$,
	\begin{align*}
	&& x_{v_{i}}^{g_{i}} - x_{v_{i - 1}}^{g_{i - 1}} = z \\&&\txt{ is an equation in } \mathcal{G}(U_1)\\
	\iff&& (x_{v_{i}} + g_{i}) - (x_{v_{i - 1}} + g_{i - 1}) = z \\&&\txt{ is an equation in } U_1\\
	\iff&& (x_{v_{i}} + g_{i} + g^*(v_{i - 1})) - (x_{v_{i - 1}} + g_{i - 1} + g^*(v_{i - 1})) = z \\&&\txt{ is an equation in } U_1\\
	\iff&& (x_{v_{i}} + g_{i} + g^*(v_{i - 1})) - (x_{v_{i - 1}} + g_{i - 1} + g^*(v_{i - 1})) = z + \sum_{j \in [h(i)]} c_{i, j}z_{i, j} \\&&\txt{ is an equation in } U_1 \\&&\snc{\sum_{j \in [h(i)]} c_{i, j}z_{i, j} \in Z(v_{i - 1}, v_{i})}\\
	\iff&& (x_{v_{i}} + g_{i} + g^*(v_{i - 1})) - (x_{v_{i - 1}} + g_{i - 1} + g^*(v_{i - 1})) = z + \sum_{j \in [h(i)]} c_{i, j}z_{i, j} + b(v_{i - 1}, v_{i}) \\&&\txt{ is an equation in } U_2\\
	\iff&& (x_{v_{i}} + g_{i} + g^*(v_{i - 1}) - \sum_{j \in [h(i)]} c_{i, j}z_{i, j} - b(v_{i - 1}, v_i)) - (x_{v_{i - 1}} + g_{i - 1} + g^*(v_{i - 1})) = z \\&&\txt{ is an equation in } U_2\\
	\iff&& x_{v_{i}}^{g_{i} + g^*(v_{i - 1}) - \sum_{j \in [h(i)]} c_{i, j}z_{i, j} - b(v_{i - 1}, v_i)} - x_{v_{i - 1}}^{g_{i - 1} + g^*(v_{i - 1})} = z \\&&\txt{ is an equation in } \mathcal{G}(U_2)\\
	\iff&& x_{v_{i}}^{g_{i} + g^*(v_{i})} - x_{v_{i - 1}}^{g_{i - 1} + g^*(v_{i - 1})} = z \\&&\txt{ is an equation in } \mathcal{G}(U_2)\\
	\iff&& f(x_{v_{i}}^{g_{i}}) - f(x_{v_{i - 1}}^{g_{i - 1}}) = z \\&&\txt{ is an equation in } \mathcal{G}(U_2),
	\end{align*}
	so $f$ is a partial isomorphism over the entire path.
\end{proof}

We are finally ready to prove $C^k$-equivalence. Duplicator's strategy is to maintain consistency between $\mathcal{G}(U_1)$ and $\mathcal{G}(U_2)$ over a minimal tree in $H$ spanning all \emph{pebbled vertices} (that is, vertices $v \in V(H)$ such that some variable $x_v^g$ is pebbled in one of the two structures). Since $H$ has high girth relative to $r$, Spoiler will never be able to expose a cycle of inconsistency, as any cycle will have to contain a long path over which Duplicator can use Lemma~\ref{lemNUGLGRadiusProperty} to define a partial isomorphism regardless of any predetermined mappings at the endpoints.

\begin{lem}\label{lemNUGLGIndistinguishable}
	$\mathcal{G}(U_1) \equiv_{C^k} \mathcal{G}(U_2)$.
\end{lem}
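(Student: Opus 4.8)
The plan is to apply Theorem~\ref{thmHella} and exhibit a winning strategy for Duplicator in the $k$-pebble bijective game on $\mathbb{A}_k = \mathcal{G}(U_1)$ and $\mathbb{B}_k = \mathcal{G}(U_2)$, which share the common universe $X := \{x_v^g \suchthat v \in V(H),\ g \in \fm\}$. Every bijection Duplicator will play has the restricted form $f(x_v^g) := x_v^{g + g^*(v)}$ for a map $g^* : V(H) \to \fm$; any such $f$ is automatically a vertex-coordinate-preserving bijection of $X$, so the only relations it can get wrong are the bundles sitting on edges of $H$, and a one-line computation (exactly as in Section~\ref{subKEquals3}) shows that $f$ carries the bundle on an edge $\{u,v\} \in E(H)$ of $\mathcal{G}(U_1)$ onto the corresponding bundle of $\mathcal{G}(U_2)$ precisely when $g^*(v) - g^*(u) \in b(u,v) + Z(u,v)$. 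Call $g^*$ \emph{coherent} on a subgraph of $H$ if this holds on every one of its edges; then Duplicator survives a round as soon as its $g^*$ is coherent on every edge of $H$ joining two vertices that carry a pebble after Spoiler's placement.

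Duplicator's invariant will be that, writing $W$ for the set of pebbled vertices ($|W| \le k$), there is a minimal subtree $T$ of $H$ spanning $W$ — so $T$ has at most $2k-2$ leaves and branch vertices in total, and all its leaves are pebbled — and a $g^*$ that is coherent on $T$ and on every $H$-edge incident to a vertex of $W$. Since $\widetilde H$, and hence $H$, has girth at least $(k+1)^2 r$, which comfortably exceeds $2k$, the edges of $H$ touching $W$ form a forest and no two vertices of $W$ lie on a short cycle; so the invariant is internally consistent, and whenever it holds the induced $f$ is a partial isomorphism on the pebbled elements.

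To show the invariant can be maintained, I would analyze a move in which Spoiler picks up a pebble pair and, after Duplicator's reply, replaces it on some $x_v^g$. Let $W^-$ be the pebbled vertices other than the vacated one, and let $T'$ be a minimal subtree of $H$ spanning $W^- \cup \{v\}$. Contract $T'$ onto its \emph{special vertices} (its $\le 2k$ pebbled vertices and branch vertices), add an edge for each $H$-edge joining two pebbled vertices, and label every resulting edge by the segment of $T'$ (a path in $H$) or single $H$-edge it came from. By Lemma~\ref{lemNUGLGRadiusProperty}, a label that is a path of length $\ge r$ imposes \emph{no} constraint on the $g^*$-difference of its endpoints; a short label imposes an affine constraint, but the short-labeled edges form a forest, because a cycle of short labels would give a cycle in $H$ of total length less than $2k\,r \le (k+1)^2 r$, contradicting the girth. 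So Duplicator chooses a spanning tree of this contracted multigraph containing all short-labeled edges, propagates $g^*$ along it one bundle at a time (each step has $2^\ell$ admissible choices, and along a tree there is never anything to reconcile), fills in each remaining long segment with Lemma~\ref{lemNUGLGRadiusProperty} so as to match its two now-fixed endpoints, and finally extends $g^*$ coherently over the remaining $H$-edges incident to $W^- \cup \{v\}$, which is possible because those edges hang off the already-determined vertices in a forest. The induced $f$ then restores the invariant no matter which placement Spoiler chose.

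The crux I expect is this last argument — in particular the claim that in the contracted multigraph every cycle is ``broken'' by a long segment, so that the affine constraints contributed by short segments never accumulate into an unsatisfiable cycle (which is exactly the phenomenon responsible for $U_2$ being unsatisfiable). This is where the girth bound $(k+1)^2 r$ is spent: with at most $k$ pebbles, $T'$ has $O(k)$ special vertices and hence $O(k)$ segments, so any cycle in $H$ has length at least the girth and must therefore contain a segment of length at least $r$. Pinning down these counting bounds precisely, handling the bookkeeping of special vertices as the tree changes, and checking that all of Duplicator's choices can be made consistently throughout a possibly infinite play is where most of the work will go.
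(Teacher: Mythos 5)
Your overall shape matches the paper's proof (coset-translation bijections $x_v^g \mapsto x_v^{g+g^*(v)}$, the coherence condition $g^*(u)-g^*(v)\in b(u,v)+Z(u,v)$, a minimal Steiner tree on the pebbled vertices, long segments made unconstrained via Lemma~\ref{lemNUGLGRadiusProperty}, and a girth argument keeping short segments acyclic), but there is a genuine gap exactly at the crux you flag. First, Duplicator must announce the bijection \emph{before} Spoiler chooses where to put the pebble, yet your construction of the new $g^*$ starts from a tree $T'$ spanning $W^-\cup\{v\}$, which depends on the placement vertex $v$; and your per-round invariant (coherence on the current tree plus on the radius-one edges at $W$) is not enough to make a single announced $g^*$ safe and extendable for every possible $v$. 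The paper resolves precisely this by building a separate map $g^*(i,u,\cdot)$ on a tree $T_i(u)$ for \emph{every} candidate placement vertex $u$, requiring all of them to agree with the previous round's values on pebbled vertices, and playing the diagonal $f_i(x_v^g):=x_v^{g+g^*(i,v,v)}$.

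Second, and more seriously, your maintenance step asserts that propagating along the spanning tree of short-labeled segments has ``never anything to reconcile.'' But $g^*$ is pinned at every pebbled vertex (its values there are forced by the pebble pairs placed in earlier rounds), and a short segment of $T'$ can join two pinned vertices --- e.g.\ two old pebbles at distance $2$ through a common neighbour, or the newly pebbled $v$ at distance $<r$ from an old pebble. Such a segment imposes an affine condition of the form $g^*(v)-g^*(w)\in\sum_e b_e+\sum_e Z_e$ with the sum of only a few $\ell$-dimensional subspaces, which already-fixed values have no a priori reason to satisfy; proving that this never happens is the heart of the lemma, not a routine extension. The paper handles it by making coherence over the \emph{entire} previous tree $T_{i-1}$ (not just edges at pebbled vertices) part of the invariant, copying those values verbatim, and proving via minimality and the girth bound (Claim~\ref{lemGirth}, using Claim~\ref{lemTreePath}) that every new short segment meets $T_{i-1}$ in at most one vertex --- so pinned values never have to be reconciled across a short segment, and any short connection between two pebbled vertices is already inside the old tree where coherence was maintained. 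Your proposal, as written, neither carries this stronger invariant nor supplies a substitute argument, so the step where forced boundary values meet the short-segment constraints is unjustified.
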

\begin{proof}
	It is without loss of generality to assume $H$ is connected, for otherwise Duplicator can apply the strategy presented here on each connected component separately. On every round $i$ of the $k$-pebble bijective game played on $\mathcal{G}(U_1)$ and $\mathcal{G}(U_2)$, for any $u \in V(H)$, let $T_i(u)$ be a minimal tree containing $u$ and all pebbled vertices just after Spoiler has picked up a pebble. Let $P_i(u)$ denote the set of all of the vertices in $T_i(u)$ which have degree at least 3 in $T_i(u)$ or contain a pebbled vertex, also including $u$. Define $T_i := T_i(u^*_i)$ and $P_i := P_i(u^*_i)$, where $u^*_i$ is the new vertex pebbled in round $i$. Finally, define the forest $F_i(u)$ to be the subgraph of $T_i(u) \setminus T_{i - 1}$ (what this notation means is, remove all edges in $T_{i - 1}$ from $T_i(u)$, then remove isolated vertices) consisting of all segments in $T_i(u) \setminus T_{i - 1}$ between vertices in $P_i(u) \cup V(T_{i - 1})$ which have length less than $r$. See Figure~\ref{figBigTree} for an example.
	
	\begin{figure*}\begin{center}
			\includegraphics[scale=.73]{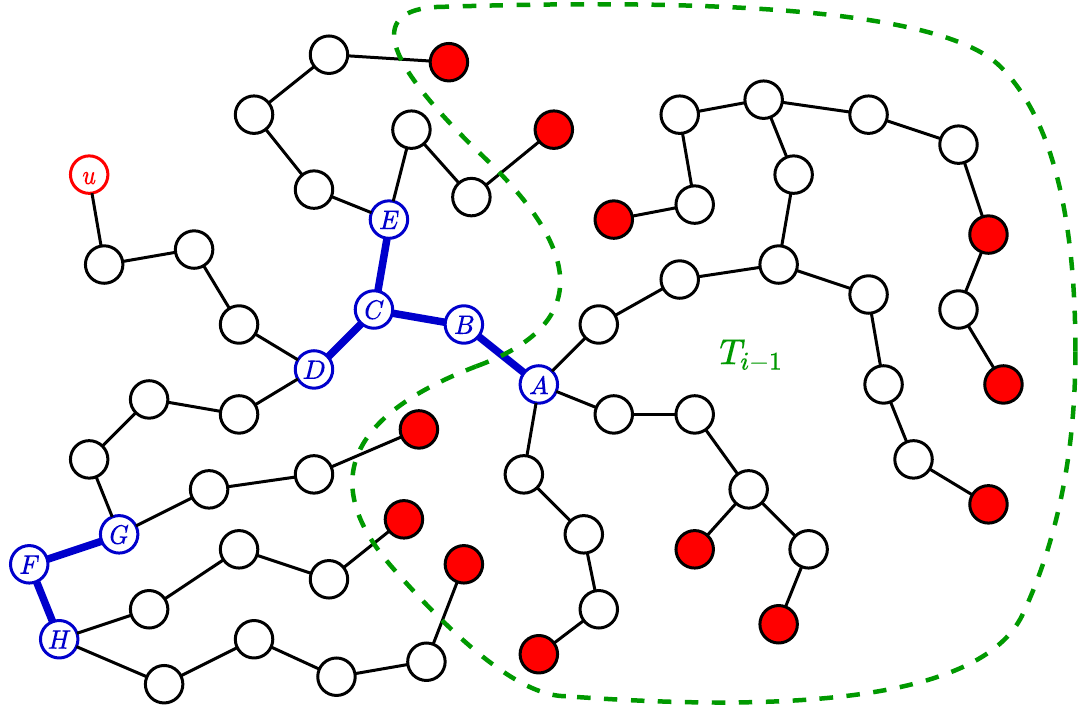}
			\caption{\label{figBigTree}The tree consisting of all vertices and edges in the figure is $T_i(u)$. This is a minimal tree that includes all pebbled vertices, which are filled in red, and vertex $u$, which is near the top left corner (also in red). The green dashed line outlines the boundary of $T_{i - 1}$ (not all vertices and edges of this tree are shown, just those that are also in $T_i(u)$). The set of vertices $P_i$ is not shown, but consists of all of the red vertices and vertices of degree $\geq 3$. Assuming that $r = 3$ (which is, of course, not nearly large enough; this is just for the purpose of illustration), the forest $F_i(u)$ is as depicted in blue, consisting of the lettered vertices $A$ through $H$ and all of the edges between those vertices.}
	\end{center}\end{figure*}

	A few remarks are in order about the conceptual meaning of $T_i(u)$ and $F_i(u)$. First, the parameter $u$ is included because Duplicator can essentially play an entirely different strategy for each possible vertex $u$ that Spoiler may choose to pebble. In other words, when determining the bijection among variables of the form $x_u^g$, Duplicator imagines what the new minimal tree $T_i$ between all pebbled vertices will look like if Spoiler chooses to put a pebble on one of those variables. We call this tree $T_i(u)$. Duplicator will imagine an entirely different tree for other vertices $u$, requiring an entirely different strategy. No matter which vertex $u^*$ is eventually chosen, Duplicator seeks to always maintain a partial isomorphism among all variables involved $T_i$. The trick is showing how to preserve this property as the tree changes from $T_{i - 1}$ in round $i - 1$ to $T_i$ in round $i$. Where the two trees overlap, we leave the bijection the same as in the previous round. Thus we only need to worry about the edges that are in $T_i$ but not in $T_{i - 1}$. We decompose this set of edges into a forest $F_i$ of short paths and junctions of degree $\geq 3$, and the remains, which must be a disjoint union of long paths (specifically, of length $\geq r$). Duplicator can use Lemma~\ref{lemNUGLGRadiusProperty} to define the bijection over these long paths regardless of the constraints on the endpoints (the constraints could come from vertices in $T_{i - 1}$, where Duplicator must not change the bijection from the previous round). Thus, $F_i$ is really the only part where Duplicator might run into trouble. Specifically, it could be impossible to extend a partial isomorphism over a connected component of $F_i$ if multiple endpoints are fixed from $T_{i - 1}$. The next two Claims rule out this possibility on the grounds that paths in $F_i$ are short and $H$ has high girth. Thus, each component of $F_i$ has at most one vertex with a constraint on the bijection. Since there are no cycles, it is possible to inductively extend the bijection to each connected component.
	
	\begin{clm}\label{lemTreePath}
		On any round $i$, for any vertex $u \in V(H)$, any path in $T_i(u)$ passes through at most $k$ vertices in $P_i(u)$.
	\end{clm}
	\begin{proof}[Proof of Claim]
		Let $p = (v_0, \seq{v}{n})$ be a path in $T_i(u)$. Consider the following map $h: p \cap P_i(u) \to V(T_i(u))$:
		\[h(v) := \threecases{\txt{if } \deg(v) < 3 \txt{ in } T_i(u)}{v}{}{\txt{a pebbled vertex (or $u$) reachable from $v$ by a path in}}{\txt{if } \deg(v) \geq 3 \txt{ in } T_i(u)}{\txt{$T_i(u)$ not passing through neighboring vertices in $p$}}\]
		Note that such a vertex in the second case above always exists when $v$ has degree at least 3, and is necessarily different from all other vertices in the image of $h$. Thus, $h$ is injective. Also, since vertices in $P_i(u)$ of degree less than 3 must be pebbled (or $u$), the output of $h(v)$ must always be a pebbled vertex (or $u$). Thus, we have an injection from $p \cap P_i(u)$ to a set of pebbled vertices (plus $u$), of which there are at most $k$ (since one pebble pair has been picked up), so $\abs{p \cap P_i(u)} \leq k$.
	\end{proof}
	
	\begin{clm}\label{lemGirth}
		On any round $i$, for any vertex $u \in V(H)$, there does not exist any path contained in $F_i(u)$ with both endpoints in $T_{i - 1}$.
	\end{clm}
	\begin{proof}[Proof of Claim]
		Suppose toward a contradiction that there was such a path $p_0$, joining $v_1, v_3 \in T_{i - 1}$. Let $v_2$ be the first vertex in $p_0$ that is contained within $T_{i - 1}$, excluding $v_1$ (it could just be $v_3$ if there are no earlier places where $p_0$ crosses $T_{i - 1}$). Let $p_1$ be the subpath of $p_0$ from $v_1$ to $v_2$. Since $T_{i - 1}$ is connected, there must be some path $p_2$ joining $v_1$ and $v_2$ in $T_{i - 1}$. Aside from $v_1$ and $v_2$, the path $p_1$ lies outside of $T_{i - 1}$ (since it is contained in $F_i$) whereas $p_2$ lies inside of $T_{i - 1}$, so $p_1$ and $p_2$ form a cycle. Since $p_1$ is contained within $T_i(u)$, by Claim~\ref{lemTreePath} it intersects at most $k$ vertices in $P_i(u)$, splitting $p_1$ up into at most $k + 1$ segments. As $p_1$ is contained in $F_i(u)$, the length of each of these segments is strictly less than $r$, so $p_1$ has length strictly less than $(k + 1)r$. Since $p_2$ is contained within $T_{i - 1}$, which is minimal, $p_2$ cannot contain any subpaths of length $(k + 1)r$ which do not intersect $P_{i - 1}$, for otherwise we could delete such a subpath and add $p_1$ to get a strictly smaller tree that still connects all vertices in $P_{i - 1}$, contradicting the minimality of $T_{i - 1}$. Applying Claim~\ref{lemTreePath} to round $i - 1$ and vertex $u^*_{i - 1}$, we have that at most $k$ vertices of $p_2$ intersect $P_{i - 1}$, so $p_2$ has length at most $k (k + 1) r$. Thus, concatenating $p_1$ and $p_2$ yields a cycle of size strictly less than
		\[(k + 1) r + k (k + 1) r = (k + 1)^2 r\]
		in $H$. This contradicts the fact that $H$ was constructed to have girth at least $(k + 1)^2 r$. Hence, no such path $p_1$ can exist.
	\end{proof}
	
	Returning to the proof of $C^k$-equivalence, let $X_i(u)$ denote the variable set of $\mathcal{G}(U_1)$ and $\mathcal{G}(U_2)$ restricted to $T_i(u)$, that is,
	\[X_i(u) := \{x_v^g \suchthat v \in T_i(u),\ g \in \fm\}.\]
	On each round $i$, Duplicator's will begin by defining a map
	\[g^*(i, u, \cdot): V(T_i(u)) \to \fm\]
	for each $u \in V(H)$. This map will determine how the bijection acts on variables involved in vertex $u$ via (\ref{equKEquals2PreserveProp}), which we will recall shortly. As we remarked near the beginning of the proof, this may be different for each possible $u$ that Spoiler could pebble. We will prove (inductively) that Duplicator can always define $g^*$ so that it has the following two properties:
	\begin{enumerate}
		\item\label{itmNUGLGProp1} For any pebbled vertex $v \in V(H)$, \[g^*(i, u, v) = g^*(i - 1, u^*_{i - 1}, v).\]
		\item\label{itmNUGLGProp2} The map $f_{i, u}: X_i(u) \to X_i(u)$ defined by
		\[f_{i, u}(x_v^g) := x_v^{g + g^*(i, u, v)}\]
		gives a partial isomorphism between $\mathcal{G}(U_1)$ and $\mathcal{G}(U_2)$ over pairs of variables whose underlying vertices are adjacent in $T_i(u)$.
	\end{enumerate}
	Obviously these properties hold at the beginning of the game, when there are no pebbled vertices and the tree is empty, so $g^*$ is the empty map. We will show that Duplicator can preserve these properties from one round to the next.

	On round $i$, Duplicator presents Spoiler with the bijection
	\[f_i(x_v^g) := x_v^{g + g^*(i, v, v)},\]
	which respects existing pebble pairs by property (\ref{itmNUGLGProp1}). No matter which vertex $u^*_i$ Spoiler chooses, the map $f_{i, u^*_i}$ agrees with $f_i$ over $u^*_i$, so we know that $f_{i, u^*_i}$ respects all pebble pairs since $f_i$ does. To see that Duplicator's bijection is a partial isomorphism between pebbled variables, first observe that, at the beginning of the next round $i + 1$, any edge between a pair of adjacent pebbled vertices of $H$ is a path of length one (which is less than $r$) between vertices in $T_i$. Thus, this edge must not be in $F_{i + 1}$, otherwise it would violate Claim~\ref{lemGirth} (applied to round $i + 1$). The only way that such an edge can fail to be in $F_{i + 1}$ is for it to have been in the previous tree, $T_i$. Thus, we have shown that, after Spoiler places the pebbles in round $i$, every pair of pebbled variables in the same relation involves vertices that are adjacent in $T_i$. Therefore, Property (\ref{itmNUGLGProp2}) guarantees Spoiler does not win on round $i$.
	
	All that remains is to show how Duplicator can satisfy properties (\ref{itmNUGLGProp1}) and (\ref{itmNUGLGProp2}) on each round $i$, assuming inductively that they are satisfied on round $i - 1$. Fix a vertex $u \in V(G)$. Duplicator defines $g^*(i, u, \cdot)$ in three steps: first over $V(T_i(u)) \cap V(T_{i - 1})$, then over $V(F_i(u)) \setminus V(T_{i - 1})$, then finally, over the remaining vertices $(V(T_i(u)) \setminus V(T_{i - 1})) \setminus V(F_i(u))$.
	
	Over $V(T_i(u)) \cap V(T_{i - 1})$, Duplicator simply sets
	\[g^*(i, u, v) := g^*(i - 1, u^*_{i - 1}, v),\]
	which is well-defined over $V(T_{i - 1})$ and clearly satisfies both properties (\ref{itmNUGLGProp1}) and (\ref{itmNUGLGProp2}), inductively assuming that $g^*(i - 1, u^*_{i - 1}, \cdot)$ did. Since $V(T_i(u)) \cap V(T_{i - 1})$ contains all pebbled vertices, we no longer have to worry about property (\ref{itmNUGLGProp1}); we just have to define $g^*(i, u, \cdot)$ on the remainder of $V(T_i(u))$ so that property (\ref{itmNUGLGProp2}) is satisfied.
	
	Duplicator then uses the following algorithm to define $g^*(i, u, \cdot)$ over $V(F_i(u)) \setminus V(T_{i - 1})$. Basically, we define $g^*$ by inductively extending from neighboring vertices. A vertex is assigned the value it needs to have in order to make Duplicator's bijection a partial isomorphism across each edge in $F_i$.
	
	\begin{algorithm}[H]
		\While{\textbf{true}}
		{
			\uIf{there exists $\{v_1, v_2\} \in E(F_i(u))$ such that $g^*(i, u, v_1)$ is defined but $g^*(i, u, v_2)$ is not defined}
			{
				$g^*(i, u, v_2) \gets g^*(i, u, v_1) + b(v_1, v_2)$\label{linDefineGStarAlgo}\;
			}
			\uElseIf{there exists $v \in V(F_i(u))$ such that $g^*(i, u, v)$ is not defined}
			{
				$g^*(i, u, v) \gets$ anything\;
			}
			\Else
			{
				\KwRet{}\;
			}
		}
	\end{algorithm}
	
	Observe that the constraints involving each edge in $F_i(u)$ considered in the first case are preserved by $f_{i, u}$: for all $g_1, g_2, z \in \fm$,
	\begin{align*}
	&& x_{v_1}^{g_1} - x_{v_2}^{g_2} = z \\&&\txt{ is an equation in } \mathcal{G}(U_1)\\
	\iff&& (x_{v_1} + g_1) - (x_{v_2} + g_2) = z \\&&\txt{ is an equation in } U_1\\
	\iff&& (x_{v_1} + g_1) - (x_{v_2} + g_2) = z + b(v_1, v_2) \\&&\txt{ is an equation in } U_2\\
	\iff&& (x_{v_1} + g_1 + g^*(i, u, v_1)) - (x_{v_2} + g_2 + g^*(i, u, v_1) + b(v_1, v_2)) = z \\&&\txt{ is an equation in } U_2\\
	\iff&& x_{v_1}^{g_1 + g^*(i, u, v_1)} - x_{v_2}^{g_2 + g^*(i, u, v_1) + b(v_1, v_2)} = z \\&&\txt{ is an equation in } \mathcal{G}(U_2)\\
	\iff&& x_{v_1}^{g_1 + g^*(i, u, v_1)} - x_{v_2}^{g_2 + g^*(i, u, v_2)} = z \\&&\txt{ is an equation in } \mathcal{G}(U_2)\\
	&&\txt{(from line~\ref{linDefineGStarAlgo} of the algorithm)}\\
	\iff&& f_{i, u}(x_{v_1}^{g_1}) - f_{i, u}(x_{v_2}^{g_2}) = z \\&&\txt{ is an equation in } \mathcal{G}(U_2).
	\end{align*}
	
	For example, if $F_i(u)$ is as in Figure~\ref{figBigTree}, then the first iteration of the algorithm would define $g^*(i, u, B)$ so that the constraints involving $A$ and $B$ are consistent under $f_{i, u}$. The next iteration would then define $g^*(i, u, C)$ so that the constraints involving $B$ and $C$ are consistent. Similarly, the next two iterations would set $g^*(i, u, D)$ and $g^*(i, u, E)$ (these could happen in either order). On the fifth iteration, we would hit the second case of the algorithm and set one of $g^*(i, u, F)$, $g^*(i, u, G)$ or $g^*(i, u, H)$ arbitrarily. The final two iterations would set the other two values according to the first case.
	
	Since the edges encountered in the first case are always made consistent, the only way that $f_{i, u}$ could fail to be a partial isomorphism over $F_i(u)$ is if, at some iteration, there were two different edges satisfying the condition in the first case that yielded different values for $g^*(i, u, v_2)$ for some $v_2$. Since $F_i(u)$ is a forest, the only way that this could happen is if some connected component of $F_i(u)$ had two distinct vertices $v_1$ and $v_2$ on which $g^*(i, u, \cdot)$ was already defined before the algorithm started, which can only happen if $v_1, v_2 \in V(T_{i - 1})$. But this means that there is a path in $F_i(u)$ from $v_1$ to $v_2$ that violates Claim~\ref{lemGirth}. Thus, property (\ref{itmNUGLGProp2}) is still satisfied.
	
	At this point, the only remaining edges of $T_i(u)$ which Duplicator needs to worry about are those which are in $T_i(u) \setminus T_{i - 1}$ but are not in $F_i(u)$. By the definition of $F_i(u)$, this consists of paths of length at least $r$, each with a disjoint set of intermediate vertices. Since $g^*(i, u, \cdot)$ has not yet been defined on any of the intermediate vertices, Duplicator can apply Lemma~\ref{lemNUGLGRadiusProperty} to each one separately. Thus, property (\ref{itmNUGLGProp2}) is satisfied over the entirety of $T_i(u)$.
	
	We have shown that Duplicator has a winning strategy in the $k$-pebble bijective game played on $\mathcal{G}(U_1)$ and $\mathcal{G}(U_2)$, so, by Theorem~\ref{thmHella}, $\mathcal{G}(U_1) \equiv_{C^k} \mathcal{G}(U_2)$.
\end{proof}

\subsection{Proof of main result}\label{subMainResultAndCorollary}

\begin{thm}\label{thmMain}
	For any $\delta > 0$ and any positive integer $\ell$, there exists a positive integer $q$ such that no sentence of FPC distinguishes \UG($q$) instances (encoded as $\tauuug$-structures) of optimal value $\geq \frac{1}{2^\ell}$ from those of optimal value $< \soundness + \delta$.
\end{thm}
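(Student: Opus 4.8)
The plan is simply to assemble the lemmas of this section into a contradiction argument. Fix $\delta > 0$ and a positive integer $\ell$, and instantiate the construction of Section~\ref{subConstruction} with these values together with an arbitrary $\varepsilon \in (0, \tfrac12)$, say $\varepsilon := \tfrac14$. This determines $m$ via~(\ref{equChooseM}) and hence the label set size $q := 2^m$; crucially, $m$ (and thus $q$) depends only on $\delta$, $\ell$ and $\varepsilon$, not on $k$. I would then prove the contrapositive: suppose, for contradiction, that some FPC sentence $\phi$ distinguishes $\UG(q)$-instances of optimal value $\geq \tfrac{1}{2^\ell}$ from those of optimal value $< \soundness + \delta$. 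By the translation of FPC into the finite-variable counting logic (Section~\ref{subLogics}), there is a positive integer $k$ such that $\phi$ is equivalent to a sentence of $C^k$.

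Next I would run the probabilistic construction at this value of $k$: take the $d$-regular simple graph $\widetilde{H}_k$ of girth at least $(k+1)^2 r$, draw the random vectors $b(\cdot,\cdot)$ and random $\ell$-dimensional subspaces $Z(\cdot,\cdot)$, and form $U_1$, $U_2$, and the $\tauuug$-structures $\mathbb{A}_k := \mathcal{G}(U_1)$ and $\mathbb{B}_k := \mathcal{G}(U_2)$. By Lemma~\ref{lemNUGLGSoundnessU2}, with probability at least $1 - 2\varepsilon = \tfrac12 > 0$ the instance $\mathcal{G}(U_2)$ has optimal value $< \soundness + \delta$; since this event has positive probability, I fix one outcome of the random choices for which it holds. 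For that fixed outcome, Lemma~\ref{lemNUGLGCompleteness} gives $\opt(\mathbb{A}_k) = \tfrac{1}{2^\ell}$, and Lemma~\ref{lemNUGLGIndistinguishable} gives $\mathbb{A}_k \equiv_{C^k} \mathbb{B}_k$ (this last lemma holds for \emph{every} outcome of the random choices, as its proof only uses that $H$ retains solely good edges and that $\widetilde{H}_k$ has girth at least $(k+1)^2 r$).

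Finally, since $\mathbb{A}_k \equiv_{C^k} \mathbb{B}_k$ and $\phi$ is equivalent to a $C^k$-sentence, we have $\mathbb{A}_k \models \phi$ if and only if $\mathbb{B}_k \models \phi$. But $\opt(\mathbb{A}_k) \geq \tfrac{1}{2^\ell}$ while $\opt(\mathbb{B}_k) < \soundness + \delta$, so $\phi$ does not distinguish these two instances after all, contradicting the assumed property of $\phi$. This proves the theorem. I do not expect any genuine obstacle here: the substantive work is entirely contained in Lemmas~\ref{lemNUGLGCompleteness}, \ref{lemNUGLGSoundnessU2}, and~\ref{lemNUGLGIndistinguishable}. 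The only two points needing a moment of care are (i) confirming that $q$ does not depend on $k$, so that a single $q$ works against all $C^k$, which is immediate from~(\ref{equChooseM}), and (ii) noting that the probabilistic construction genuinely yields a valid pair for each fixed $k$, which follows because the failure probability $2\varepsilon$ is strictly less than $1$. One could also phrase the conclusion directly via Theorem~\ref{thmHella} (Duplicator's winning strategy) rather than through $C^k$-equivalence, but the two are interchangeable here.
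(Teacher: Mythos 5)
Your proposal is correct and matches the paper's own proof of Theorem~\ref{thmMain} essentially step for step: fix $\delta$, $\ell$, $\varepsilon$, note $q$ is independent of $k$, translate the hypothetical FPC sentence into $C^k$, run the construction at that $k$, and use the positive success probability together with Lemmas~\ref{lemNUGLGCompleteness}, \ref{lemNUGLGSoundnessU2} and~\ref{lemNUGLGIndistinguishable} to extract a concrete $C^k$-equivalent pair with the required gap, yielding the contradiction. Your added remarks (that $q$ does not depend on $k$, and that Lemma~\ref{lemNUGLGIndistinguishable} holds for every outcome of the randomness) are accurate clarifications rather than deviations.
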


\begin{proof}
	Let $\delta$ and $\ell$ be given, pick an arbitrary $\varepsilon \in (0, \frac12)$, then let $q$ be as defined in Section~\ref{subConstruction}. Suppose toward a contradiction that there is an FPC sentence $\phi$ distinguishing the two cases. Then $\phi$ can be translated into a logically equivalent sentence $\phi'$ of $C^k$ for some fixed $k$. Given this value of $k$, let $\mathbb{A}_k := \mathcal{G}(U_1)$ and $\mathbb{B}_k := \mathcal{G}(U_2)$ be as defined in Section~\ref{subConstruction}. By Lemmas~\ref{lemNUGLGCompleteness},~\ref{lemNUGLGSoundnessU2} and~\ref{lemNUGLGIndistinguishable}, with probability at least $1 - 2\varepsilon$, we will have that \begin{align*}
	\opt(\mathbb{A}_k) \geq \frac{1}{2^\ell}, && \opt(\mathbb{B}_k) < \soundness + \delta,
	\end{align*}
	yet $\mathbb{A}_k \equiv_{C^k} \mathbb{B}_k$. Since $1 - 2\varepsilon > 0$, this implies that there is \emph{some} pair of $\tauuug$-structures $(\mathbb{A}_k, \mathbb{B}_k)$ produced by this construction satisfying these properties. This is a contradiction, since $\phi'$ cannot distinguish $\mathbb{A}_k$ and $\mathbb{B}_k$ as they are $C^k$-equivalent, so $\phi$ cannot distinguish them either.
\end{proof}

\begin{cor}\label{corUGLowGapMain}
	For any $\alpha \in (0, 1]$, for sufficiently large $q$ there does not exist an FPC-definable $\alpha$-approximation algorithm for \UG$(q)$.
\end{cor}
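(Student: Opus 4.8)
The plan is to obtain this directly from Theorem~\ref{thmMain} and Lemma~\ref{lemGapToInapproximability}. Fix $\alpha \in (0,1]$. We will apply the main theorem with $c := \frac{1}{2^\ell}$ and $s := \soundness + \delta$ for a suitable positive integer $\ell$ and rational $\delta > 0$, chosen so that the hypothesis of Lemma~\ref{lemGapToInapproximability} is met and $\frac{s}{c} \le \alpha$. The relevant ratio simplifies to
\[
\frac{s}{c} \;=\; 2^\ell\left(\frac{1}{2^{2\ell-1}+2^{\ell-1}} + \delta\right) \;=\; \frac{2}{2^\ell+1} + 2^\ell\,\delta.
\]
Since $\frac{2}{2^\ell+1} \to 0$ as $\ell \to \infty$, we first pick $\ell$ large enough that $\frac{2}{2^\ell+1} < \alpha$, and then pick a rational $\delta > 0$ small enough that both $\frac{2}{2^\ell+1} + 2^\ell\delta \le \alpha$ and $\delta \le \frac{1}{2^\ell}\cdot\frac{2^\ell-1}{2^\ell+1}$; the second bound guarantees $c \ge s$, which is exactly what is needed for Lemma~\ref{lemGapToInapproximability} to apply.

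With $\ell$ and $\delta$ fixed in this way, Theorem~\ref{thmMain} produces a positive integer $q$ such that no FPC sentence distinguishes \UG$(q)$ instances of optimal value $\ge c$ from those of optimal value $< s$. Lemma~\ref{lemGapToInapproximability}, instantiated with these $c$ and $s$, then rules out any FPC-definable $\alpha'$-approximation algorithm for \UG$(q)$ with $\alpha' \ge \frac{s}{c}$; in particular, since $\alpha \ge \frac{s}{c}$, there is no FPC-definable $\alpha$-approximation algorithm for \UG$(q)$. Because $\ell$ may be taken arbitrarily large (and larger $\ell$ forces, via (\ref{equChooseM}), a larger value of $m$ and hence of $q = 2^m$), the label set size $q$ obtained this way is itself as large as desired; moreover, the optimal-value-preserving first-order reduction sending a \UG$(q)$ instance to the \UG$(qt)$ instance on label set $[q]\times[t]$ with each permutation $\pi$ replaced by $\pi\times\id$ shows that FPC-inapproximability for label set size $q$ transfers to all label set sizes that are multiples of $q$. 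Either way, for all sufficiently large $q$ there is no FPC-definable $\alpha$-approximation algorithm for \UG$(q)$.

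The argument has no genuinely hard step: all of the difficulty is already packed into Theorem~\ref{thmMain}, and what remains is the elementary one-line optimization of $\ell$ and $\delta$ so that $\frac{s}{c}$ falls below $\alpha$. The only subtlety worth flagging is that $\delta$ must be chosen small relative to $\frac{1}{2^\ell}$, and not merely relative to $\alpha$, in order to keep $c \ge s$ so that Lemma~\ref{lemGapToInapproximability} is legitimately applicable; a secondary, purely bookkeeping point is the passage from ``for some large $q$'' to ``for all sufficiently large $q$'', handled by the reduction above.
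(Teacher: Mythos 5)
Your core argument is correct and is essentially the paper's own proof: instantiate Lemma~\ref{lemGapToInapproximability} with $c = \frac{1}{2^\ell}$ and $s = \soundness + \delta$ as supplied by Theorem~\ref{thmMain}, after tuning $\ell$ and $\delta$ so that $\frac{s}{c} \le \alpha$. Your simplification $\frac{s}{c} = \frac{2}{2^\ell+1} + 2^\ell\delta$ is right, and the two side conditions you impose (a rational $\delta$ small enough that $\frac{s}{c}\le\alpha$, and also small enough that $c \ge s$) are exactly what the lemma needs. In fact your adaptive choice of $\delta$ is more careful than the paper, which fixes $\delta := \soundness$ and $\ell := \lceil(2-\log_2\alpha)/2\rceil$; note that the paper's displayed simplification $\frac{2^{\ell+1}}{2^{2\ell-1}} = 2^{2-2\ell}$ should read $2^{2-\ell}$, and with the paper's parameters at $\alpha = 1$ one gets $\ell = 1$ and $\frac{s}{c} = \frac{4}{3} > \alpha$ (indeed $s > c$), so your explicit safeguards are not superfluous.

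The one place you overclaim is the final quantifier over $q$. The padding interpretation sending a \UG($q$) instance to the \UG($qt$) instance on $[q]\times[t]$ with $\pi \mapsto \pi\times\id$ does preserve the optimal value exactly and composes with any FPC-definable approximation for \UG($qt$), so the lower bound transfers to every multiple of the constructed $q$; together with the observation that the construction yields arbitrarily large $q$, this still falls short of ``for all sufficiently large $q$,'' since non-multiples are not covered (and the naive extension of each $\pi$ by the identity on the extra labels does not preserve the optimum, so covering them would require a further gadget). However, the paper's own proof does not address this point at all---it simply exhibits one $q$ for each $\alpha$---so your argument establishes at least as much as the paper's; just phrase the conclusion as ``there exist arbitrarily large $q$'' (or as holding for all multiples of some $q_0$) unless you supply a reduction handling every larger label-set size.
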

\begin{proof}
	Let $\alpha \in (0, 1]$ be given. Define
	\begin{align*}
	\ell := \left\lceil 2 - \log_2(\alpha) \right\rceil, &&
	\delta := \soundness.
	\end{align*}
	Note that $\ell$ is a positive integer and $\delta > 0$. Let
	\begin{align*}
	c &:= \frac{1}{2^\ell},\\
	s &:= \soundness + \delta = \frac{2}{2^{2\ell - 1} + 2^{\ell - 1}}
	\end{align*}
	Then, by Theorem~\ref{thmMain}, there exists a positive integer $q$ such that no sentence of FPC distinguishes $\tauuug$-structures with optimal value $\geq c$ from those with optimal value $< s$. Observe that 
	\begin{align*}
	\frac{s}{c} &= \frac{\frac{2}{2^{2\ell - 1} + 2^{\ell - 1}}}{\frac{1}{2^\ell}}\\
	&= \frac{2^{\ell + 1}}{2^{2\ell - 1} + 2^{\ell - 1}}\\
	&\leq \frac{2^{\ell + 1}}{2^{2\ell - 1}}\\
	&= 2^{2 - \ell}\\
	&\leq 2^{2 - \left(2 - \log_2(\alpha)\right)} \snc{\ell \geq 2 - \log_2(\alpha)}\\
	&= \alpha.
	\end{align*}
	Therefore, it follows from Lemma~\ref{lemGapToInapproximability} that there is no FPC-definable $\alpha$-approximation algorithm for \UG($q$).
\end{proof}$ $

\section{Conclusion}\label{secConclusion}

This paper has two objectives. The first is to introduce a new proof idea into the arsenal of tools for CFI-constructions. Our main challenge was that, to prove an FPC-inapproximability result, it is not sufficient to have Duplicator maintain the standard kind of invariant used in CFI-constructions: that the two structures are always isomorphic by a bijection agreeing with all pebble pairs, except at one place where the isomorphism breaks down. Using techniques from graph theory and linear algebra, we have shown how Duplicator can win the $k$-pebble bijective game while only maintaining an extremely weak invariant---consistency over a very small portion of the two structures. Hopefully this approach can be extended to other approximation settings as well.

The second objective is to make progress on the FPC-version of the Unique Games Conjecture. We have established an FPC-inapproximability gap which is competitive with recent results in ordinary complexity theory about polynomial-time approximation algorithms, yet our result holds without the assumption that $\PP \neq \NP$. Of course, our proof method is entirely different than the proof methods used in ordinary complexity theory---so different that the fundamental problem which is shown to be inexpressible in FPC, distinguishing $\mathcal{G}(U_1)$ from $\mathcal{G}(U_2)$, is not even $\NP$-hard.\footnote{To see this, observe that a given bundle of constraints in either structure is satisfiable if and only if a certain system of $m - \ell$ linear equations over $\fm$ is solvable. (Specifically, the equations express the constraint that $x^{g_1}_{v_1} - x^{g_2}_{v_2}$ in the $\ell$-dimensional affine subspace $Z(v_1, v_2) - b(v_1, v_2) + (g_1 - g_2)$.) In $\mathcal{G}(U_1)$, the union of all of these systems is completely satisfiable, while in $\mathcal{G}(U_2)$, they are not, so distinguishing $\mathcal{G}(U_1)$ from $\mathcal{G}(U_2)$ can be accomplished by Gaussian elimination.}

To prove the FPC-UGC in its entirety, it will be necessary to (almost completely) eliminate the parallel, contradictory constraints between pairs of vertices. These simple gadgets provide Duplicator with the power to make choices, circumventing the ``uniqueness" inherent in ``Unique Games," but they kill the completeness of the inapproximability gap: the more-satisfiable instance will still only have satisfiability at most $\frac{1}{2}$. One possible approach would be to build larger gadgets with more complicated automorphisms achieving a similar effect as these parallel constraints. We leave this possibility to future work.

\section*{Acknowledgments}
\noindent This work was generously supported by the Winston Churchill Foundation through a Churchill Scholarship at the University of Cambridge, UK. I would like to express my deep gratitude to Anuj Dawar, whose advice and insights substantially improved the results in this paper. I am also very grateful to Neil Immerman, as well as my LMCS and LICS reviewers, for detailed, helpful comments on earlier drafts.

\bibliographystyle{alphaurl}
\bibliography{bibliography}

\end{document}